\documentclass[journal]{IEEEtran}
\usepackage{amsfonts}
\usepackage{amssymb}
\usepackage[cmex10]{amsmath}
\usepackage{graphicx}
\usepackage{subfigure}
\usepackage{cite}
\usepackage{tabularx,booktabs}
\usepackage{amsthm}
\usepackage{booktabs}
\usepackage[ruled,lined]{algorithm2e}
\usepackage{multirow}
\usepackage{setspace}
\usepackage{fancyhdr}
\usepackage{algorithmic}
 \usepackage{enumerate}
 \usepackage{color}


%

\IEEEoverridecommandlockouts
\newcommand{\RNum}[1]{\uppercase\expandafter{\romannumeral #1\relax}}
\newcommand{\be}{\begin{equation}}
\newcommand{\ee}{\end{equation}}

\newcommand{\bII}{{\bf I }}

\newcommand{\bR}{R}

\newcommand{\bx}{x}

\newcommand{\bH}{{H}}

\newcommand{\bF}{{F}}

\newcommand{\bQ}{{Q}}

\newtheorem{Def}{Definition}
\newtheorem{Pro}{Proposition}

\newtheorem{Rem}{Remark}
\newtheorem{Lem}{Lemma}

\allowdisplaybreaks
\ifCLASSINFOpdf
\else
\fi

\begin{document}

\title{Computationally Efficient Distributed Multi-sensor Fusion with  Multi-Bernoulli  Filter}
\author{
Wei Yi, Suqi Li$^\ast$, Bailu Wang, Reza Hoseinnezhad and Lingjiang Kong
\thanks{W. Yi, S. Li, B. Wang and L. Kong are with the School of Information and Communication Engineering, the University of Electronic Science and Technology of China. \emph{(Suqi Li is the corresponding author, Email:qi\_qi\_zhu1210@163.com)}

R. Hoseinnezhad is with the School of Engineering, RMIT University,
Victoria 3083, Australia.
}}
\maketitle
 \thispagestyle{empty}
\begin{abstract}
This paper proposes a computationally efficient algorithm for distributed fusion in a sensor network in which multi-Bernoulli (MB) filters are locally running in every sensor node for multi-target tracking. The generalized Covariance Intersection (GCI) fusion rule is employed to fuse multiple MB random finite set densities. The fused density comprises a set of fusion hypotheses that grow exponentially with the number of Bernoulli components. Thus, GCI fusion with MB filters can become computationally intractable in practical applications that involve tracking of even a moderate number of objects. In order to accelerate the multi-sensor fusion procedure, we derive a theoretically sound approximation to the fused density. The number of fusion hypotheses in the resulting density is significantly smaller than the original fused density. It also has a parallelizable structure that allows multiple clusters of Bernoulli components to be fused independently. By carefully clustering Bernoulli components into isolated clusters using the GCI divergence as the distance metric, we propose an alternative to build exactly the approximated density without exhaustively computing all the fusion hypotheses. The combination of the proposed approximation technique and the fast clustering algorithm can enable a novel and fast GCI-MB fusion implementation. Our analysis shows that the proposed fusion method can dramatically reduce the computational and memory requirements with small bounded $L_1$-error. The Gaussian mixture implementation of the proposed method is also presented. In various numerical experiments, including a challenging scenario with up to forty objects, the efficacy of the proposed fusion method is demonstrated.
\end{abstract}


%
\IEEEpeerreviewmaketitle
\section{Introduction}
\IEEEPARstart{D}{istributed} Multi-sensor Multi-object Tracking (DMMT) technology is recently advocated in the information fusion community, since it generally benefits from lower communication costs and immunity to single-node fault, compared with optimal centralized fusion solutions. In practice, the correlations between the posterior densities from different sensors are usually unknown, and this makes it very challenging to devise DMMT solutions.  The optimal solution to this problem has been proposed by Chong~et~al.~\cite{CY-Chong}, but the heavy computational burden of extracting the common information can rule out this solution in many applications. A suboptimal solution with demonstrated tractability has been formulated based on the {Generalized Covariance Intersection} (GCI)
method proposed by Mahler \cite{Mahler-1}. Essentially, GCI is the generalization of {Covariance Intersection} (CI) \cite{Uhlmann}. Compared to the CI fusion rule, which mainly utilizes the first and second order statistical characteristics of the single-object densities, the GCI fusion rule is capable to fuse multiple full multi-object densities with unknown correlations among sensor nodes, while intrinsically avoiding the double counting of common information \cite{double-counting}.
The GCI fused multi-object density is also named as {Exponential Mixture Density}~(EMD)~\cite{Uney-2,EMD-Julier} or {Kullback-Leibler Average}~(KLA)~\cite{Battistelli}.

Clark~et~al.~\cite{Clark} derived tractable implementations of the GCI fusion rule for especial cases where the multi-object distributions are Poisson, independent identically distributed (i.i.d.) clusters and Bernoulli distributions. Utilizing these formulations, the GCI fusion method can be implemented for distributed fusion in sensor networks where the Probability Hypothesis Density (PHD)/Cardinalized PHD (CPHD) filter or the Bernoulli filter are running locally in every sensor node. Battistelli~et~al.~\cite{Battistelli} and Guldogan~\cite{Mehmet} proposed the Gaussian Mixture (GM) implementation of distributed fusion with CPHD filter and Bernoulli filter, respectively. Furthermore, a
consensus approach to distributed multi-object tracking was first introduced in \cite{Battistelli}, with which the GCI fusion can be implemented in a fully distributed manner.  The Sequential Monte Carlo (SMC) implementation of the GCI fusion with CPHD filter was proposed by Uney~et~al.~\cite{Uney-2}.

Compared with PHD \cite{PHD-Vo,book_mahler} and CPHD~\cite{refr:CPHD,Franken,Vo-CPHD,MeMber_Vo2} filters, the Multi-Bernoulli (MB) filter \cite{MeMber_Vo2,MeMber_Vo3} can be more efficient and accurate in problems that require object individual existence probabilities.
The performance of the MB filters has been well demonstrated in a wide range of practical monitoring problems such as radar tracking \cite{Radar-tracking}, video and audio tracking \cite{MeMber_Vo3, Reza_visual_tracking}, sensor control \cite{Reza_sensor_control_AES, Reza_sensor_control_letter}, and acoustic sensor tracking \cite{MeM-superpositional-sensor}.
In addition, a number of novel extensions of the MB recursion have also been investigated. Examples include the multiple model multi-Bernoulli filter proposed to cope with maneuvering objects~\cite{Dunne_MM}, a robust multi-Bernoulli filter that can accommodate an unknown non-homogeneous clutter and detection profile~\cite{Vo_radar_target_nonline}, a joint multi-object estimator for multi-Bernoulli models~\cite{Estimator-Multi-Bernoulli}, and a few enhanced multi-Bernoulli filters~\cite{Enhanced-multi-Bernoulli-1,Enhanced-multi-Bernoulli-2}.

The generalization of MB filters to the DMMT problem was firstly investigated in~\cite{GCI-MB}, where a distributed fusion algorithm based on GCI fusion rule (called GCI-MB in this paper) is proposed. While the enhanced performance of the GCI-MB fusion has been demonstrated~\cite{GCI-MB}, its major drawback is the heavy computational cost that increases exponentially with the number of objects in the surveillance area. For the DMMT over a sensor network, it is of paramount importance to reduce the local (in-node) computational expenses as much as possible, since in the typical distributed sensor networks, each node usually has limited processing power and energy resources.

In this paper, the computations involved in GCI-MB fusion are investigated, showing that the major contributor to the heavy computational burden is the exhaustive calculation of the weights and fused single-object densities of the GCI-MB fusion hypotheses, noting that the number of hypotheses increases super-exponentially with the number of objects. This observation provides the insight that leads to a novel computationally efficient and naturally parallelizable implementation of the GCI-MB fusion.

The major contributions are summarized as follows:
\begin{enumerate}
  \item \textit{Devising a principled approximation to the fused density}: By discarding all the insignificant GCI-MB fusion hypotheses with negligible weights, then normalizing the rest of hypotheses, we obtain a principled approximation to the original fused density. This approximation not only consists of significantly smaller number of fusion hypotheses but also enjoys an appealing structure: it can be factorized into several mutually independent and smaller size multi-object densities that lead to a principled independence approximation. Each of these factorized densities are shown to be equally returned by the GCI-MB fusion performed with a partial cluster of entire Bernoulli components. We also quantify the approximation error by deriving the $L_1$-error~\cite{LMB_Vo2} between the original  density and the approximated one, and show that it is reasonably small, and indeed, negligible for many practical applications.
   \item \textit{Fast clustering of the Bernoulli components}: The main challenge in obtaining the approximation is to truncate all the insignificant GCI-MB fusion hypotheses without computing them all in an exhaustive manner. To this end, by carefully clustering the Bernoulli components into isolated clusters according to a distance metric, i.e., GCI divergence, and then discarding the hypotheses of the association pairs of Bernoulli components from different isolated clusters among sensors, one can build exactly the independence approximation without exhaustively computing all the terms. By modeling the underline data structure of Bernoulli components among all sensors as a undirected graph, the clustering problem is tantamount to seeking connected components of a undirected graph dynamically, and can be solved by the disjoint-set data structure based fast solver with a computational cost that is only polynomial in the number of Bernoulli components.
  \item \textit{Computationally efficient GCI-MB fusion}: The combination of the independence approximation and the fast clustering algorithm can enable a computationally efficient fusion algorithm. The fast clustering algorithm is employed at an early stage to obtain the approximated density. Then by utilizing the structure of the approximated density, GCI fusion is performed within each smaller clusters of Bernoulli components independently and in a parallelized manner, namely, via performing fusion between each pair of the factorized multi-object densities among sensors. The GM implementation of the proposed fast GCI-MB fusion is also presented.
\end{enumerate}
In numerical results, the efficiency and accuracy of the proposed fast GCI-MB fusion algorithm with GM implementation is tested in a challenging scenario with up to forty objects. Preliminary results have been published in the conference paper \cite{conference-version}. This paper presents a more complete theoretical and numerical study.

The rest of the paper is organized as follows. Background material and notation are introduced in Section II.
The GCI-MB fusion is briefly reviewed and its computational intractability is discussed in Section III.
Section IV devotes to the development of the proposed distributed fusion algorithm with MB filter based on the independence approximation and the fast clustering algorithm. Section V presents the GM implementation of the proposed GCI-MB fusion algorithm, including the algorithm pseudocode, and the analysis of its computational complexity. Section VI demonstrates the effectiveness of the proposed fusion algorithm via various numerical experiments. Concluding remarks are finally provided in Section VII.

\section{Background and Notation}
Consider a sensor network which is constructed by a set of sensor nodes, $s=1,\cdots,N_s$. Each sensor node is linked with its neighbours to exchange information, and is assumed to be equipped with some limited processing, memory and other electronic hardware for sensing, signal processing and communication.

To describe the statistics of unknown and time-varying number of objects, the multi-object state at time $k$ is naturally represented as a \textit{Random Finite Set} (RFS) \cite{book_mahler,book2_mahler} $X^k=\{\bx^k_1,\bx^k_2,\ldots,\bx^k_{n}\}\in \mathcal{F}(\mathbb{X})$, where $\mathbb{X}$ is the space of single-object states, and $\mathcal{F}(\mathbb{X})$ is the collection of all finite subsets of $\mathbb{X}$. Also, $\mathcal{F}_n(\mathbb{X})$ denotes the collection of all finite subsets of $\mathbb{X}$ with the cardinality $n$. At each time step $k$, each sensor node $s$ receives the observations of multiple objects, denoted by $Z_s^{k}=\{z^k_{1,s},\cdots,z^k_{m,s}\}\in\mathcal{F}(\mathbb{Z})$, where $\mathbb{Z}$ is the space of observations, and $\mathcal{F}(\mathbb{Z})$ is the collection of all finite subsets of $\mathbb{Z}$. Let $Z_s^{1:k}=(Z_s^{1},\ldots,Z_s^{k})$ denotes the history of observations from time $1$ to time $k$ at sensor $s$.

\subsection{Distributed multi-object tracking}
DMMT for a sensor network usually consists of three  stages  described as follow:

--\textit{ Local  filtering}: Each node $s$ collects observations of multiple objects $Z_s^k$ at each time step $k$, and performs local filtering based on the multi-object Bayesian recursion \cite{book_mahler}.

-- \textit{Information exchange:} Each sensor node exchanges the multi-object posteriors with its neighbours via communication links.

-- \textit{Fusion of posteriors:} Each node performs a fusion operation to combine its locally computed posterior with the ones communicated by its neighbours.

\subsection{Local filtering using multi-Bernoulli filters}
\subsubsection{Multi-Bernoulli distribution}
An RFS $X$ distributed according to an MB distribution is defined as the union of $M$ independent Bernoulli RFSs  $X^{(\ell)}$~\cite{Mahler-1},
\begin{equation}
X=\bigcup_{\ell=1}^{M}X^{(\ell)}.
\end{equation}
The MB distribution is completely characterized by a set of parameters $\{(r^{(\ell)},p^{(\ell)})\}_{\ell=1}^{M}$, where $r^{(\ell)}$ denotes the existence probability and $p^{(\ell)}(\cdot)$ denotes the probability density of the $\ell$-th Bernoulli RFS, or Bernoulli component. The multi-object probability density of an MB RFS is given by \cite{Mahler-1},
\begin{align}\label{multi-Bernoulli}
\begin{split}
&\pi(\left\{x_1,\ldots,x_n\right\})=\sum_{1\leq i^1\neq\ldots\neq i^n\leq M}Q^{(i^1,\cdots,i^n)}\prod_{j=1}^n p^{(i^{j})}(\bx_{j})
\end{split}
\end{align}
where $n=\{1,\cdots,M\}$ and
\begin{align}\label{Phi}
\begin{split}
Q^{(i^1,\cdots,i^n)}=\prod_{\ell=1}^{M}(1-r^{(\ell)})\prod_{j=1}^n\frac{r^{(i^{j})}}{1-r^{(i^{j})}}.
\end{split}
\end{align}
In addition to (\ref{multi-Bernoulli}), another equivalent  form of MB distribution could be expressed as \cite{GCI-MB}
\begin{equation}\label{multi-Bernoulli_V2}
\begin{split}
\pi&(\!\left\{\bx_1\!,\!\ldots\!,\!\bx_n\right\}\!)
=\sum_{\sigma}\sum_{I\in \mathcal{F}_n(\mathbb{L})}Q^{I}\prod_{i=1}^{n}p^{([I]^v(i))}(\bx_{\sigma(i)})
\end{split}
\end{equation}
where
$\mathbb{L}\!\triangleq\!\{1,\ldots,M\}$ is a set of indexes numbering Bernoulli components, $I$ is the  set of indexes of densities,
$\sigma$ denotes one possible permutation of $I$, $\sigma(i)$ denotes the {$i$-th} element of the permutation,  the summation $\sum_\sigma$ is taken over all permutations on the numbers $\{1,\cdots,n\}$,  $[I]^v$ is a vector constructed by sorting the elements of the set $I$, and
 \begin{align}\label{Phi2_1}
\begin{split}
&Q^{I}=\prod_{\ell'\in I}{r^{(\ell')}}\prod_{\ell\in \mathbb{L}\backslash I}(1-r^{(\ell)}).
\end{split}
\end{align}
Hereafter, each Bernoulli component $\ell\in\mathbb{L}$, paramerized by $(r^{(\ell)},p^{(\ell)}(\cdot))$, is also called as a hypothetic object.
This naming is natural and intuitive  in the sense that $r^{{(\ell)}}$ describes the probability of existence of the hypothetic object, while  $p^{(\ell)}(\cdot)$ describes the probability distribution of the hypothetic object conditioned on existence.
 Note that the subsequent results of this paper follow the MB distribution of form (\ref{multi-Bernoulli_V2}).
\subsubsection{Multi-Bernoulli filtering}
An MB filter recursively computes and propagates an MB posterior forwards in time, via Bayesian prediction and update steps. Under the standard motion model, the MB density is closed for the Chapman-Kolmogorov equation  \cite{book_mahler}. However, the multi-object posterior resulted from the update step of the MB filter is not necessarily an MB posterior and depends on the observation models. The standard observation model, image model, and superpositional sensor model lead to different forms of multi-object posteriors that are approximated by different variants of MB densities \cite{MeMber_Vo2,MeMber_Vo3,MeM-superpositional-sensor}.

\subsection{Information exchange and fusion based on GCI}
At time $k$, assume two nodes 1 and 2 in a sensor network maintain their local posteriors $\pi_{1}(X^k|Z_{1}^{1:k})$ and $\pi_{2}(X^k|Z_{2}^{1:k})$ which are both RFS multi-object densities. Each sensor node exchanges its local multi-object posterior with the other. The fused posterior based on GCI fusion rule is the geometric mean, or the exponential mixture of the local posteriors,
\begin{align}\label{G-CI}
\begin{split}
\!\!\!\pi_{\omega}(X^k|Z_1^{1:k},Z_{2}^{1:k})\!=\!\frac{\pi_{1}(X^k|Z_{1}^{1:k})^{\omega_1}\pi_{2}(X^k|Z_{2}^{1:k})^{\omega_2}}
                                              {\int \pi_{1}(X^k|Z_{1}^{1:k})^{\omega_1}\pi_{2}(X^k|Z_{2}^{1:k})^{\omega_2}\delta X}
\end{split}
\end{align}
where $\omega_1$ and $\omega_2$ represent the relative fusion weights of each nodes. The weights are normalized, i.e. $\omega_1+\omega_2=1$. One possible choice of the weights  to ensure converge of the fusion is the so-called Metropolis weights \cite{Xiao, consensus-weight-2, Battistelli}. An alternative is based on an optimization process in which the objective weights minimize a selected cost function, such as the determinant or the trace of the covariance of the fused density (see \cite{Uney-2, EMD-Julier, adaptive-weight} for more  detailed  discussions). For convenience of notation, in what follows we omit explicit references to the time index $k$, and the condition $Z_s^{1:k}$.

The fused posterior given by (\ref{G-CI}) minimizes the weighted sum of its {Kullback-Leibler Divergence} (KLD) \cite{Battistelli} with respect to two given distributions,
\begin{equation}\label{EMD}
\begin{split}
  \pi_\omega=\arg \min_\pi(\omega_1D_{\text{KL}}(\pi\parallel \pi_1)+\omega_2 D_{\text{KL}}(\pi\parallel \pi_2))
  \end{split}
\end{equation}
where $D_{\text{KL}}$ denotes the KLD defined as
\begin{equation}\label{KLD}
\begin{split}
 D_{\text{KL}}(f||g)\triangleq \int f(X)\log{\frac{f(X)}{g(X)}}\delta X
  \end{split}
\end{equation}
where the integral in (\ref{KLD}) admits both the set integral \cite{book_mahler} and Euclidean notion of integral.

\begin{Rem}
The GCI fusion can be easily extended to $N_s>2$ sensors by sequentially applying (\ref{G-CI}) $N_s-1$ times, where the ordering of pairwise fusions is irrelevant. A similar approach has been used in distributed fusion, for instance in the GCI fusion with CPHD filters \cite{Battistelli} and MB filters \cite{GCI-MB}.
 \end{Rem}

\section{GCI-MB Fusion and Its Computational Intractability}
This section presents a brief review of the GCI-MB fusion  algorithm in \cite{GCI-MB}, then discusses its  computational intractability.
\subsection{GCI fusion of MB distributions}
Let the local multi-object density $\pi_s$, $s=1,2$ be an MB density parameterized by \begin{equation}
\pi_s=\{(r_s^{(\ell)},p_s^{(\ell)}(\cdot))\}_{\ell=1}^{M_s}.
\end{equation}

Consistent with the second form of MB distribution given in  (\ref{multi-Bernoulli_V2}), we define
\begin{equation}
\mathbb{L}_{s}=\{1,\cdots,M_{s}\}
\end{equation}
which is a set of indexes numbering Bernoulli components of sensor $s$, $s=1,2$.

For the subsequent development, a definition of the fusion map, which describes a hypothesis that a set of hypothetic objects in sensor 2 are one-to-one matching with a set of hypothetic objects in sensor 1, is provided firstly.

\begin{Def}\label{defination_fusion_maps}
Without loss of generality, assume that $|\mathbb{L}_1|\leq|\mathbb{L}_2|$. A fusion map (for the current time) is a function
$\theta: \mathbb{L}_1\!\rightarrow \!\mathbb{L}_2$ such that  $\theta(i)\!=\!\theta(i^{\ast})$ implies $i\!=\!i^{\ast}$.
 The set of all such fusion maps is called fusion map space denoted by $\Theta$. The subset of $\Theta$ with domain $I$ is denoted by $\Theta(I)$. For notational convenience, we define $\theta(I)\triangleq
\{\theta(i), i\in I\}$.
\end{Def}

According to Propositions 2-3 in \cite{GCI-MB}, the GCI-MB fusion contains the following two steps:

\noindent \textbf{Step 1} - \textit{Calculation of the fused density}: 
The GCI fusion of two MB densities in form of (\ref{multi-Bernoulli_V2}) yields  a generalized multi-Bernoulli (GMB) density \footnote{Note that the GMB density can be viewed as the unlabeled version of the generalized labeled multi-Bernoulli (GLMB) distribution \cite{LMB_Vo,LMB_Vo2}.} \cite{GCI-MB} of the following form,
\begin{equation}\label{GCI-MB}
\begin{split}
\!\pi_\omega(\left\{\bx_1,\!\cdots\!,\bx_n\right\})
\!\!=\!\!\sum_\sigma\!\!\!\sum_{(I_1,\theta)\in \mathcal{F}_n(\mathbb{L}_1)\!\times\! \Theta{(I_1)}}\!\!\!\!\!\!\!\!\!w_\omega^{(I_1,\theta)}\prod_{i=1}^{n}p_\omega^{([I_1]^v(i),\theta)}(\!\bx_{\sigma(i)}\!)
\end{split}
\end{equation}
where
\begin{align}
\label{fuse-w}
w_\omega^{(I_1,\theta)}&=\frac{\widetilde{w}_\omega^{(I_1,\theta)}}{\eta_\omega}\\
\label{fuse-p_p}p_{\omega}^{(\ell,\theta)}(\bx)&=\frac{p_1^{(\ell)}(\bx)^{\omega_1}p_2^{(\theta(\ell))}(\bx)^{\omega_2}}{Z_\omega^{(\ell,\theta)}},\,\,\,\,\ell\in I_1
\end{align}
with
\begin{align}
\label{fuse-w_p}\widetilde{w}_\omega^{(I_1,\theta)}&=\left(Q_1^{I_1}\right)^{\omega_1}\left(Q_2^{\theta(I_1)}\right)^{\omega_2}\prod_{\ell\in I_1}Z_\omega^{(\ell,\theta)}\\
\label{K-1}\eta_\omega&=\sum_{I_1\in \mathcal{F}(\mathbb{L}_1)}\sum_{\theta\in\Theta{(I_1)}}\widetilde{w}_\omega^{(I_1,\theta)}\\
\label{Z_w} Z_\omega^{(\ell,\theta)}&=\int p_1^{(\ell)}({\bx})^{\omega_1}p_2^{(\theta(\ell))}({\bx})^{\omega_2}d{\bx}.
\end{align}
It can be seen from (\ref{GCI-MB})-(\ref{Z_w}) that the fusion process is simple and intuitive. The fused GMB density of (\ref{GCI-MB}) is a mixture of multi-object exponentials. The fused single-object density $p_{\omega}^{(\ell,\theta)}(\bx)$ of (\ref{fuse-p_p}) can be viewed as the GCI fusion result of the single-object densities of two paired hypothetic objects from sensors 1 and 2. 
The quantity $\eta_\omega$ in (\ref{K-1}) is a normalization constant, and the following un-normalized fused density is referred to as the un-normalized GMB density hereafter,
\begin{align}\label{unnormalized}
\begin{split}
&\widetilde\pi_{\omega}(\{\bx_{1},\!\cdots\!,\bx_{n}\})\!\\
&=\!\sum_{\sigma}\!\!\sum_{(I_1,\theta)\in \mathcal{F}_n(\mathbb{L}_1) \times\Theta{(I_1)}}\widetilde {w}_\omega^{(I_1,\theta)}\prod_{i=1}^{n}p_\omega^{([I_1]^v(i),\theta)}(\bx_{\sigma(i)}).
\end{split}
\end{align}

\noindent \textbf{Step 2} -  \textit{MB approximation of the GMB density}:
To allow  the subsequent fusion with another MB density, the fused density should be also in the MB form. To this end, $\pi_{\omega}(X)$ is approximated by an MB distribution, {$\pi_{\omega,\text{MB}}(X)=\{(r_\omega^{(\ell)},p_\omega^{(\ell)})\}_{\ell\in\mathbb{L}_1}$,} which matches exactly its first-order moment, where
\begin{equation}
\begin{split}
\label{r_l}\!\!\!\!\!\!\!\!\!\!\!\!\!\!\!\!\!\!\!\!\!\!\!\!\!\!\!\!\!\!r_\omega^{(\ell)}&=\sum_{I_1\in \mathcal{F}(\mathbb{L}_1)}\sum_{\theta\in\Theta(I_1)}1_{I_1}(\ell)w_\omega^{(I_1,\theta)}
\end{split}
\end{equation}
\begin{equation}
\begin{split}
\label{p_l}p_\omega^{(\ell)}(x)&=\!\sum_{I_1\in \mathcal{F}(\mathbb{L}_1\!)}\sum_{\theta\in\Theta(I_1)}1_{I_1}(\ell)w_\omega^{(I_1,\theta)}p_{\omega}^{(\ell,\theta)}(x)\big{/}r_\omega^{(\ell)},
\end{split}
\end{equation}
where $1_{I_1}(\ell)=1$ if $\ell\in I_1$; otherwise, $1_{I_1}(\ell)=0$.

It is remarked  that the first-order moment matching  approximation  adopted here is a widely used approximation technique in the RFS based multi-object tracking algorithms  \cite{LMB-Reuter, MeMber_Vo2,Efficient-GCI-GLMB}.
\subsection{Computational intractability of the GCI-MB fusion}
Observing (\ref{GCI-MB}), the calculation of the fused GMB density involves a weighted sum of products of single-object densities. Hereafter, each term in the summation, indexed by $(I_1,\theta)\in\mathcal{F}(\mathbb{L}_1)\times\Theta$ for the fused GMB density is called as a \textit{fusion hypothesis}. Hence direct implementation of GCI-MB fusion needs to exhaust all fusion hypotheses and compute their corresponding weights and fused single-object densities.
\begin{Rem}
Observing (\ref{GCI-MB}) -- (\ref{Z_w}), calculation of each fusion hypothesis $(I_1,\theta)$ implicitly means that given the existence of the set of hypothetic objects $I_1$, any hypothetic object $\ell\in I_1$ of sensor 1 and the associated  hypothetic object $\theta(\ell)$  of sensor 2 are considered to be originated from the same object, and the respective statistical information should be fused based on~(\ref{fuse-p_p}).
\end{Rem}
According to Binomial theorem, the cardinality of $\mathcal{F}(\mathbb{L}_1)$ is
\begin{equation}\label{number_subsets}
|\mathcal{F}(\mathbb{L}_1)|={\sum}_{n=0}^{|\mathbb{L}_1|} C_{|\mathbb{L}_1|}^n=2^{|\mathbb{L}_1|},
\end{equation}
where $C_m^n$ denotes the number of $n$-combinations from a set of $m$ elements. The cardinality of the set $\Theta$ of fusion maps is
\begin{equation}\label{number_mappings}
|\Theta|=A_{|\mathbb{L}_2|}^{|\mathbb{L}_1|}.
\end{equation}
where $A_m^n$ denotes the number of $n$-permutations of $m$ elements. Consequently, the total number of fusion hypotheses, denoted by $N_H$, is computed by
\begin{equation}\label{number_hypothesis}
N_H=|\mathcal{F}(\mathbb{L}_1)\times\Theta|=2^{|\mathbb{L}_1|}\times A_{|\mathbb{L}_2|}^{|\mathbb{L}_1|} \geq 2^{|\mathbb{L}_1|}\times|\mathbb{L}_1|!.
\end{equation}
It can be seen from (\ref{number_hypothesis}) that the number of fusion hypotheses grows as the number of local Bernoulli components with a speed of at least $\mathcal{O}( 2^{|\mathbb{L}_1|}\times|\mathbb{L}_1|!)$. Note that due to the factorial term, the computational complexity grows super-exponentially with the number of local Bernoulli components, $|\mathbb{L}_1|$.

The number of local Bernoulli components is directly related to the implementation of the local MB filter. Theoretically, this number increases linearly with time steps $k$ (with no bound) due to the inclusion of of birth components at each time step. In practice, even in presence of a pruning strategy (to curb the growing number of Bernoulli components), this number can be significantly larger than the true number of objects.

The super-exponential rate of growth of number of hypothesis with the number of Bernoulli components which itself grows with the number of existing objects, together make the original GCI-MB fusion computationally stringent in many practical applications, especially those involving tracking of numerous objects. Thus, it is of practical importance to devise an efficient implementation of GCI-MB fusion algorithm.

\section{Computationally Efficient GCI-MB Fusion}
This section firstly provides an intuitive perception of the GCI-MB fusion through the analysis of ``bad'' association pair. Motivated by this analysis, we advocate a principled independence approximation to the original GMB density, and also characterize its approximation error in terms of the $L_{1}$-error. Finally, by utilizing this independence approximation, a computationally efficient GCI-MB fusion algorithm is developed.
\subsection{``Bad'' association pair}

\begin{Def}
An ordered pair that is comprised of hypothesized objects $\ell$ from sensor 1 and $\ell'$ from sensor 2, is called an association pair of hypothetic objects (or an association pair for short), and denoted by $(\ell,\ell')\in\mathbb{L}_1\times\mathbb{L}_2$.
\end{Def}

As we discussed in Remark 1, a fusion hypothesis $(I,\theta)$ implicitly means that under the existence of the set of hypothetic objects $I_{1}$ for sensor 1, each hypothetic object $\ell\in I_{1}$ is considered to be associated with the hypothetic object  $\ell'=\theta(\ell)$ of sensor 2. In this respect, a fusion hypothesis $(I_1,\theta)$ can be interpreted as a set of association pairs. With a little notational abuse, in the rest of the paper, we adopt the following notation,  \begin{equation}(I_1,\theta)\triangleq\{(\ell,\theta(\ell)):\ell\in I_1\}.\end{equation}

Intuitively, if statistical information relevant to the states of two hypothetic objects in an association pair have a large discrepancy, there is a small chance that they describe the same object and hence, the corresponding fusion is not well-posed. As a result, this association pair can be considered as a ``bad'' association pair.

In the context of the GCI fusion rule, we choose the GCI divergence \cite{GCI-GLMB} between location densities to quantify the discrepancy between the statistics of the corresponding hypothetic objects. For any two hypothetic objects $\ell$ and $\ell'$ with location densities $p_1^{(\ell)}(x)$ and $p_2^{(\ell')}(x)$, respectively, the GCI divergence between them is computed by
 \begin{equation}\label{GCI-distance}
d(\ell,\ell')= -\ln\int \left[p_1^{(\ell)}(x)\right]^{\omega_1}\left[p_2^{(\ell')}(x)\right]^{\omega_2}d x.
\end{equation}

\begin{Rem}
GCI divergence was first proposed in~\cite{GCI-GLMB} as a tool to quantify the degree of similarity/difference between statistical densities. It is an extension of the Bhattacharyya distance which is a well-known measure for the amount of overlap between two statistical samples or populations~\cite{Bhattacharyya-distance}. In the special case with $\omega_1=\omega_2=\frac{1}{2}$, the GCI divergence defined in~\eqref{GCI-distance} returns the Bhattacharyya distance.
\end{Rem}
\begin{Rem}
It has been demonstrated that a large GCI divergence between two densities, leads to the GCI fusion being significantly violating the Principle of Minimum Discrimination Information~\cite{GCI-GLMB}. In the extreme case where the GCI divergence approaches $+\infty$, the densities are not compatible from the Bayesian point of view because their supports are disjoint, and the GCI fusion is not well defined.
\end{Rem}
Utilizing the GCI divergence as a distance between hypothetic objects, we present a measurable definition of a ``bad'' association pair.
 \begin{Def}\label{bad-association-pair}
 Given an association pair $(\ell,\ell')\in\mathbb{L}_1\times\mathbb{L}_2$, if the distance between hypothetic objects $\ell$ and $\ell'$ satisfies
\begin{equation}\label{GCI-distance-threshold}
d(\ell,\ell')=-\ln\int \left[p_1^{(\ell)}(x)\right]^{\omega_1}\left[p_2^{(\ell')}(x)\right]^{\omega_2}d x>\gamma,
\end{equation}
$(\ell,\ell')$ is said to be a ``bad'' association pair, where $\gamma$ is a predefined sufficiently large threshold.
\end{Def}

Note that in GCI-MB fusion equations~(\ref{fuse-w_p})~and~(\ref{Z_w}), the weight of any fusion hypothesis $(I_1,\theta)$ is functionally related to the distance $d(\ell,\theta(\ell))$ between hypothetic objects in the association pairs included in $(I_1,\theta)$. Indeed, we have:
\begin{equation}\label{weight-GCI-divergence}
\begin{split}
w^{(I_1,\theta)}\propto&  \left(Q_1^{I_1}\right)^{\omega_1}\left(Q_2^{\theta(I_1)}\right)^{\omega_2}{\prod}_{\ell\in I_1}Z_\omega^{(\ell,\theta)}\\
=&\left(Q_1^{I_1}\right)^{\omega_1}\left(Q_2^{\theta(I_1)}\right)^{\omega_2}{\prod}_{\ell\in I_1}\exp^{-d(\ell,\theta(\ell))}.
\end{split}
\end{equation}
Hence, when fusion hypothesis $(I_1,\theta)$ includes a ``bad'' association pair $(\ell,\theta(\ell))$, then the corresponding weight $w^{(I_1,\theta)}$ after GCI-MB fusion becomes negligible, with no considerable contribution to the fusion results.
\subsection{Isolated clustering and the truncated GMB fused density}
In this subsection, we attempt to conveniently find and truncate all the negligible fusion hypotheses, which include at least one ``bad'' association pairs, by resorting to an isolated clustering of hypothetic objects of sensors $s=1,2$. In the following, a formal definition for a clustering is presented first. Then, we define an isolated clustering based on the concept of ``bad'' association pair as outlined in Definition~\ref{bad-association-pair}.
\begin{Def}\label{clustering}
A clustering $C=\{\mathcal{C}_1,\cdots,\mathcal{C}_{N_{\mathcal{C}}}\}$ of $\mathbb{L}_1$ and $\mathbb{L}_2$ is a set of clusters formed as paired subsets of hypothetic objects $\mathcal{C}_{g}=(\mathbb{L}_{1,g},\mathbb{L}_{2,g})$, where every cluster $g=1,\ldots,N_{\mathcal{C}}$ satisfies the following conditions:
\begin{itemize}
	\item $\mathbb{L}_{1,g} \subseteq \mathbb{L}_{1},$
	\item $\mathbb{L}_{2,g} \subseteq \mathbb{L}_{2},$
    \item $\mathbb{L}_{1,g}\cup\mathbb{L}_{2,g}\neq\emptyset.$
\end{itemize}
In addition, the subsets that pair to form the clusters are disjoint and partition the overall hypothetic objects, i.e.
\begin{itemize}
  \item $\mathbb{L}_1=\cup_{g=1}^{N_{\mathcal{C}}} \mathbb{L}_{1,g},\ \ \mathbb{L}_2=\cup_{g=1}^{N_{\mathcal{C}}} \mathbb{L}_{2,g},$
  \vspace{2mm}
  \item $\left((g,g')\in[1:N_{\mathcal{C}}]^2 ,  g\neq g'\right)\Rightarrow \left(\mathbb{L}_{1,g}\cap\mathbb{L}_{1,g'}\right)=\emptyset,$
  \vspace{2mm}
  \item $\left((g,g')\in[1:N_{\mathcal{C}}]^2 ,  g\neq g'\right)\Rightarrow \left(\mathbb{L}_{2,g}\cap\mathbb{L}_{2,g'}\right)=\emptyset.$
\end{itemize}
\end{Def}
Based on Definition \ref{clustering}, the hypothetic objects from different sensors belonging to different clusters are further referred to as \textit{inter-cluster association pairs}. Then, an isolated clustering is constructed in a principled way that any \textit{inter-cluster association pairs} of this clustering are ``bad'' association pairs. The formal definition is given as follows.
\begin{Def}\label{isolated-clustering}
A clustering $C=\{\mathcal{C}_1,\cdots,\mathcal{C}_{N_{\mathcal{C}}}\}$ for index sets $\mathbb{L}_1$ and $\mathbb{L}_2$ is said to be an isolated clustering, if $C$ satisfies:
\begin{equation}\label{clustering-criterion}
\min_{(\ell,\ell')\in \mathcal{P}} d(\ell,\ell')>\gamma,
\end{equation}
where
\begin{equation}\label{clustering-criterion2}
\mathcal{P}= \bigcup_{(g, g')\in[1:N_{\mathcal{C}}]^2, g\neq g'} \mathbb{L}_{1,g}\times\mathbb{L}_{2,g'}.
\end{equation}
\end{Def}
\begin{Rem}
In an isolated clustering $C$, any two different clusters $\mathcal{C}_g$ and $\mathcal{C}_{g'}$, $g\neq g'\in [1:N_{\mathcal{C}}]$ are said to be mutually isolated as they satisfy:
\begin{equation}
\min_{(\ell,\ell')\in\left(\mathbb{L}_{1,g}\times\mathbb{L}_{2,g'}\right)\bigcup \left(\mathbb{L}_{1,g'}\times\mathbb{L}_{2,g}\right)}d(\ell,\ell')>\gamma,
\end{equation}
where the union $\left(\mathbb{L}_{1,g}\times\mathbb{L}_{2,g'}\right)\cup \left(\mathbb{L}_{1,g'}\times\mathbb{L}_{2,g}\right)$ describes a set of inter-cluster association pairs for clusters $g$ and $g'$.  The isolation between cluster $g$ and $g'$ essentially demands that all the corresponding inter-cluster association pairs are ``bad'' association pairs.
\end{Rem}
A hypothesis $(I_1,\theta)$ is called an \emph{inter-cluster fusion hypothesis} if it includes at least one \emph{inter-cluster association pair},
\begin{equation}\label{inter-cluster-hypothesis}
(I_1,\theta)\bigcap \mathcal{P}\neq\emptyset.
\end{equation}
As a result, for an isolated clustering, due to the inclusion of inter-cluster association pair(s) (bad association pair(s)), all the \emph{inter-cluster fusion hypothesis} have a negligible contribution to the GMB density according to~(\ref{weight-GCI-divergence}), and they are exactly what we want to find and truncate. Discarding all the inter-cluster hypotheses, denoted by $\mathbb{D}=\{(I_1,\theta)\in\mathcal{F}(\mathbb{L}_1)\times\Theta: (I_1,\theta)\cap\mathcal{P}\neq\emptyset)\}$, the un-normalized GMB density given by (\ref{unnormalized}) can be approximated by
\begin{equation}\label{unnormalized_trun}
\begin{split}
&\widetilde\pi'_\omega(\{\bx_{1},\!\cdots\!,\bx_{n}\})\!=\\
&\!\sum_{\sigma}\!\!\sum_{(I_1,\theta)\in \mathcal{F}_n(\mathbb{L}_1) \times\Theta(I_1)-\mathbb{D}}\widetilde {w}_\omega^{(I_1,\theta)}\prod_{i=1}^{n}p_\omega^{([I_1]^v(i),\theta)}(\bx_{\sigma(i)}),
\end{split}
\end{equation}
and the normalization factor is given by
\begin{equation}\label{NF_trun}
\eta'_\omega=\sum_{(I_1,\theta)\in \mathcal{F}_n(\mathbb{L}_1) \times\Theta(I_1)-\mathbb{D}}\widetilde {w}_\omega^{(I_1,\theta)}.
\end{equation}
Hence, the normalized truncated GGI-MB density is
\begin{equation}\label{GMB-fused-density-discarded}
\pi'_\omega(X)=\widetilde\pi'_\omega(X)/\eta'_\omega.
\end{equation}
Omitting the inter-cluster hypotheses results in a heavily reduced number of hypotheses,
$N'_{H}\!=\!|\mathcal{F}(\mathbb{L})\times\Theta|\!-\!|\mathbb{D}|$. The truncated fused density (\ref{unnormalized_trun}) is also a transitional form of the independence approximation presented in the next subsection.
\subsection{Independence approximation of the fused GMB density}
Further than the truncated GMB fused density, in this subsection, we derive another equivalent form of equation (\ref{GMB-fused-density-discarded}), called the independence approximation, as outlined in Proposition 1.  The structure of this independence approximation suggests a parallelizable implementation of GCI fusions within individual clusters, in which the number of fusion hypotheses is even smaller than that of the truncated GMB fused density in most cases as concluded in Proposition 2.

Without loss of generality, an isolated clustering $C$ for index sets $\mathbb{L}_1$ and $\mathbb{L}_2$ is a union of three types of clusters,
\begin{equation}
C=C_{\text{I}}\cup C_{\text{II}}\cup C_{\text{III}},
\end{equation}
where
\begin{equation}\label{ccc}
\begin{split}
&\mathbb{L}_{1,g}\neq\emptyset\text{, }\mathbb{L}_{2,g}\neq\emptyset\text{, }\text{for  }  g\in C_{\text{I}} \\
&\mathbb{L}_{1,g}=\emptyset\text{, }\mathbb{L}_{2,g}\neq\emptyset\text{, }\text{for  }  g\in C_{\text{II}} \\
&\mathbb{L}_{1,g}\neq\emptyset\text{, }\mathbb{L}_{2,g}=\emptyset\text{, }\text{for  }  g\in C_{\text{III}}.
\end{split}
\end{equation}
For cluster types $C_{\text{II}}$ or $C_{\text{III}}$, the counterpart of one sensor in cluster $g$ is an empty set, because pairing any hypothetic object in that cluster with any hypothetic object at the other sensor yields a ``bad'' association pair.

In the following proposition, we show how a principled approximation to the original GMB density can be obtained by discarding all the insignificant inter-cluster hypotheses. This approximation not only consists of significantly smaller number of fusion hypotheses but also enjoys an appealing structure.
\begin{Pro}
Given an isolated clustering of index sets $\mathbb{L}_1$ and $\mathbb{L}_2$, $C=\{\mathcal{C}_1,\cdots,\mathcal{C}_{N_{\mathcal{C}}}\}=C_{\emph{I}}\cup C_{\emph{II}} \cup C_{\emph{III}}$, the normalized truncated GGI-MB density of form (\ref{GMB-fused-density-discarded}) can be expressed as
\begin{equation}\label{G-GCI-MB}
\pi'_\omega(X)
=\sum_{\mathop{\biguplus}\limits_{g:\mathcal{C}_g\in C_\emph{I} } X_g{= X}} \prod_{g: \mathcal{C}_g\in C_\emph{I}}\!\pi_{\omega,g}(X_g)
\end{equation}
where
the summation is taken over all mutually disjoint subsets $X_g,\,\, g\!:\!\mathcal{C}_g\in\!C_\emph{I}$ of $X$ such that $\bigcup_{g:\mathcal{C}_g\in C_\emph{I} } X_g = X$; each $\pi_{\omega,g}(\cdot)\,\,(g: \mathcal{C}_g\in C_\emph{I})$ is a GMB density  returned by the GCI fusion performed with the  MB distributions of the $g$th  cluster of Bernoulli components, i.e., $\pi_{s,g}=\{(r_s^{(\ell)},p_s^{(\ell)})\}_{\ell\in\mathbb{L}_{s,g}}, s=1,2$, i.e.,
\begin{equation}
\begin{split}
& \pi_{\omega,g}(\{\bx_1,\cdots,\bx_n\})\\=&\sum_{\sigma}
\sum_{(I_{1,g},\theta_g)\in \mathcal{F}_{n}\!(\mathbb{L}_{1,g}\!) \!\times\! \Theta_{g}({I_{1,g}})}\!\!\!\!\!\! w_\omega^{(I_{1,g},\theta_g)}\!{\prod}_{i=1}^{n}p_\omega^{([I_{1,g}]^v\!(i), \theta_g)}\!(x_{\sigma(i)}),
\end{split}
\end{equation}
where the injective function $\theta_g:\mathbb{L}_{1,g}\rightarrow\mathbb{L}_{2,g}$ denotes the fusion map of the $g$th cluster (without loss of generality, assume $\mathbb{L}_{1,g}\leq\mathbb{L}_{2,g} $), $\Theta_g$ denotes the set of all such fusion maps, and
\begin{align}
\label{fuse-w_p-g}w_{\omega,g}^{(I_{1,g},\theta_g)}&=\frac{\widetilde{w}_{\omega,g}^{(I_{1,g},\theta_g)}}{\eta_{\omega,g}}\\
\label{fuse-p_p-1}p_{\omega,g}^{(\ell,\theta_g)}(\bx)&=\frac{p_1^{(\ell)}(\bx)^{\omega_1}p_2^{(\theta_g(\ell))}(\bx)^{\omega_2}}{Z_{\omega,g}^{(\ell,\theta_g)}},\,\,\,\,\ell\in I_{1,g},
\end{align}
with
\begin{align}
\label{fuse-widetilde-w_p-g}\widetilde{w}_{\omega,g}^{(I_{1,g},\theta_g)}&=\left(Q_{1,g}^{I_{1,g}}\right)^{\omega_1}\left(Q_{2,g}^{\theta_g(I_{1,g})}\right)^{\omega_2}\!\!\prod_{\ell\in I_{1,g}}Z_{\omega,g}^{(\ell,\theta_g)}\\
\label{Z_w-g} Z_{\omega,g}^{(\ell,\theta_g)}&=\int p_1^{(\ell)}({x})^{\omega_1}p_2^{(\theta_g(\ell))}({x})^{\omega_2}d{x}\\
\label{K-1-g}\eta_{\omega,g}&=\sum_{I_{1,g}\in \mathcal{F}(\mathbb{L}_{1,g})}\sum_{\theta_g\in\Theta_g(I_{1,g})}\widetilde{w}_\omega^{(I_{1,g},\theta_g)}.
\end{align}
\end{Pro}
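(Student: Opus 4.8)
The plan is to prove the identity \eqref{G-GCI-MB} by a direct manipulation of the truncated un-normalized density \eqref{unnormalized_trun}, showing that discarding the inter-cluster hypotheses forces every surviving fusion hypothesis to factorize across the type-\textrm{I} clusters, and that this factorization is exactly the set-convolution appearing on the right-hand side. First I would observe that a hypothesis $(I_1,\theta)$ survives the truncation (i.e. $(I_1,\theta)\notin\mathbb{D}$) precisely when no association pair $(\ell,\theta(\ell))$, $\ell\in I_1$, is an inter-cluster pair; by Definition~\ref{isolated-clustering} and the partition properties of Definition~\ref{clustering}, this means that for every $\ell\in I_1$, if $\ell\in\mathbb{L}_{1,g}$ then $\theta(\ell)\in\mathbb{L}_{2,g}$ — the map $\theta$ restricted to the surviving domain never crosses cluster boundaries. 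Consequently $I_1$ decomposes uniquely as $I_1=\biguplus_{g:\mathcal{C}_g\in C_{\mathrm I}} I_{1,g}$ with $I_{1,g}=I_1\cap\mathbb{L}_{1,g}$ (note clusters in $C_{\mathrm{III}}$ contribute nothing because any $\ell$ there would be forced to a bad pair, and $C_{\mathrm{II}}$ has empty sensor-1 part), and $\theta$ decomposes as a disjoint union of cluster-local injections $\theta_g:\,I_{1,g}\to\mathbb{L}_{2,g}$. Thus the surviving index set $\mathcal F_n(\mathbb L_1)\times\Theta(I_1)\setminus\mathbb D$ is in bijection with tuples $\big((I_{1,g},\theta_g)\big)_{g}$ ranging independently over $\mathcal F(\mathbb L_{1,g})\times\Theta_g(I_{1,g})$ subject to $\sum_g |I_{1,g}| = n$.

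Next I would push this decomposition through the weight and density formulas. The key multiplicativity facts are: (i) $Q_1^{I_1}=\prod_{\ell'\in I_1} r_1^{(\ell')}\prod_{\ell\in\mathbb L_1\setminus I_1}(1-r_1^{(\ell)})$ factorizes over clusters as $\prod_g Q_{1,g}^{I_{1,g}}$ because $\mathbb L_1=\biguplus_g\mathbb L_{1,g}$ (and similarly $Q_2^{\theta(I_1)}=\prod_g Q_{2,g}^{\theta_g(I_{1,g})}$, using that $\theta(I_1)=\biguplus_g\theta_g(I_{1,g})$ and that the "missed" indices in $\mathbb L_2$ split consistently, with the type-\textrm{II} clusters contributing a constant factor $\prod_{g\in C_{\mathrm{II}}}\prod_{\ell\in\mathbb L_{2,g}}(1-r_2^{(\ell)})$ to every surviving hypothesis); (ii) $\prod_{\ell\in I_1} Z_\omega^{(\ell,\theta)}=\prod_g\prod_{\ell\in I_{1,g}} Z_{\omega,g}^{(\ell,\theta_g)}$ trivially; (iii) the product of single-object densities $\prod_{i=1}^n p_\omega^{([I_1]^v(i),\theta)}(\bx_{\sigma(i)})$ regroups, after choosing which argument indices go to which cluster, into $\prod_g\prod_{i}p_{\omega,g}^{([I_{1,g}]^v(i),\theta_g)}(\cdot)$. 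Combining (i)–(iii), $\widetilde w_\omega^{(I_1,\theta)}$ equals the constant type-\textrm{II} factor times $\prod_g \widetilde w_{\omega,g}^{(I_{1,g},\theta_g)}$; since $\widetilde w_{\omega,g}$ for $g\in C_{\mathrm{III}}$ with $I_{1,g}=\emptyset$ is just $\prod_{\ell\in\mathbb L_{1,g}}(1-r_1^{(\ell)})$, those too are absorbed into a global constant. The constants cancel upon normalization by $\eta'_\omega$, so $w_\omega^{(I_1,\theta)}=\prod_{g:\mathcal C_g\in C_{\mathrm I}} w_{\omega,g}^{(I_{1,g},\theta_g)}$, with the per-cluster normalizers $\eta_{\omega,g}$ being exactly the marginal sums \eqref{K-1-g} — this last point needs a short argument that $\eta'_\omega = (\text{const})\prod_g\eta_{\omega,g}$, which follows from the bijection of index sets established above together with the factorization of $\widetilde w$.

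Finally I would reassemble: substituting the factorized weights and densities into \eqref{unnormalized_trun}, normalizing, and then reorganizing the double sum $\sum_\sigma\sum_{(I_{1,g},\theta_g)_g}$ by first summing over the assignment of the $n$ state arguments among the clusters, one recognizes exactly the defining expression of the set-convolution $\sum_{\biguplus_g X_g = X}\prod_g \pi_{\omega,g}(X_g)$ with each factor $\pi_{\omega,g}$ given by the cluster-local GCI-MB fusion formula in the statement — here the permutation bookkeeping (the $\sum_\sigma$ and the sorted-vector notation $[\cdot]^v$) must be handled carefully, exactly as in the derivation of the original GMB form \eqref{GCI-MB} in \cite{GCI-MB}. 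I expect the main obstacle to be precisely this combinatorial/permutation bookkeeping — verifying that the grouping of the symmetric sum over $\sigma$ and the sorted-index conventions on the left match, term for term and with correct multiplicities, the product of symmetric sums on the right (the standard "multinomial" splitting of a symmetric function of a finite set into a convolution over sub-sets). The measure-theoretic and algebraic factorizations in steps (i)–(iii) are routine given the disjointness built into Definition~\ref{clustering}; the care lies in the set-integral/permutation reorganization.
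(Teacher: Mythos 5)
Your proposal takes essentially the same route as the paper's own proof in Appendix A: restrict to the surviving hypotheses, observe that each such $(I_1,\theta)$ splits uniquely into cluster-local pieces $(I_{1,g},\theta_g)$ so that the index space $\mathcal{F}(\mathbb{L}_1)\times\Theta-\mathbb{D}$ factorizes over the type-I clusters, factorize $Q_1^{I_1}$, $Q_2^{\theta(I_1)}$ and $\prod_{\ell\in I_1}Z_\omega^{(\ell,\theta)}$ (hence $\widetilde{w}_\omega^{(I_1,\theta)}$) accordingly, factorize the normalization constant, and regroup the symmetric sum over $\sigma$ into the set convolution $\sum_{\biguplus_g X_g=X}\prod_g\pi_{\omega,g}(X_g)$, exactly as the paper does in its long displayed computation. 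Your explicit handling of the constant $(1-r)$ factors contributed by the one-sided clusters, which cancel upon normalization, is a detail the paper's factorization steps gloss over, but it does not alter the argument.
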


The detailed proof of Proposition 1 is given in Appendix A.  The expression (\ref{G-GCI-MB}) indicates that
\begin{itemize}
\item The fused GMB densities of different clusters in $C_{\text{I}}$ are mutually independent \cite{book_mahler} after discarding all the inter-cluster hypotheses (see [11, chap. 11, pg. 385]  for the standard expression for the independence of random finite subsets);
\item The clusters of types $C_{\text{II}}$ and $C_{\text{III}}$ disappear in this approximated density, since all the relevant hypotheses are inter-cluster hypotheses.
\end{itemize}
Thus, we refer to this approximated density as an \emph{independence approximation} since it can be factorized into several mutually independent and smaller size multi-object densities.

The structure of this independence approximation suggests an efficient and intuitive implementation of GCI-MB fusion, namely parallelizable implementation of GCI fusions within individual clusters. In such an implementation, the total number of fusion hypotheses over all clusters becomes
\begin{equation}
N''_{H}=\sum_{g: \mathcal{C}_g\in C_{\text{I}}}\left(|\mathcal{F}(\mathbb{L}_{1,g})\times\Theta_{g}|\right).
\end{equation}
\begin{Pro}
As long as $|\mathbb{L}_{1,g}|\geq 1$ with ${g: \mathcal{C}_g\in C_\emph{I}}$, and {{$N_{\mathcal{C}_{\emph{I}}}=|C_{\emph{I}}|\geq 2$,}} the following holds:
\begin{equation}N''_H\leq N'_H,\end{equation}
where the equality holds if and only if $|\mathbb{L}_{1,g}|= 1$, $|\mathbb{L}_{2,g}|=1$ and $N_{\mathcal{C}_\emph{I}}= 2$.
\end{Pro}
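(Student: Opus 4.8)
The plan is to compare the two hypothesis counts term by term and isolate a single ``atomic'' inequality whose product over clusters delivers the result. Recall that $N'_H$ is the number of fusion hypotheses surviving in the truncated GMB density $\widetilde\pi'_\omega$, i.e. the number of pairs $(I_1,\theta)\in\mathcal{F}(\mathbb{L}_1)\times\Theta$ that contain \emph{no} inter-cluster association pair, while $N''_H=\sum_{g:\mathcal{C}_g\in C_{\mathrm{I}}}\bigl(|\mathcal{F}(\mathbb{L}_{1,g})\times\Theta_g|\bigr)$ is the total count when the fusion is done cluster-by-cluster. First I would establish the combinatorial identity
\begin{equation}\label{prop2-product}
N'_H \;=\; \prod_{g:\mathcal{C}_g\in C_{\mathrm{I}}} n_g, \qquad n_g \triangleq |\mathcal{F}(\mathbb{L}_{1,g})\times\Theta_g|,
\end{equation}
the point being that a fusion hypothesis with no inter-cluster pair is exactly a \emph{choice, independently in each $C_{\mathrm{I}}$ cluster}, of a subset $I_{1,g}\subseteq\mathbb{L}_{1,g}$ together with an injection $\theta_g:I_{1,g}\to\mathbb{L}_{2,g}$ (clusters in $C_{\mathrm{II}},C_{\mathrm{III}}$ contribute a single trivial choice since any pair they could form is bad). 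This step is essentially the counting analogue of the factorization in Proposition 1, so I would cite that proposition's combinatorics rather than re-derive it.

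With \eqref{prop2-product} in hand, the claim reduces to: if $m\geq 2$ positive integers $n_1,\dots,n_m$ each satisfy $n_g\geq 2$ (which holds because $|\mathbb{L}_{1,g}|\geq 1$ forces $n_g=|\mathcal{F}(\mathbb{L}_{1,g})\times\Theta_g|\geq 2^{1}\cdot 1 = 2$), then $\sum_{g=1}^m n_g \leq \prod_{g=1}^m n_g$. I would prove this elementary fact by induction on $m$: the base case $m=2$ is $n_1+n_2\leq n_1 n_2$, equivalent to $(n_1-1)(n_2-1)\geq 1$, true since each factor is $\geq 1$; for the inductive step, write $\prod_{g=1}^{m+1}n_g = n_{m+1}\prod_{g=1}^m n_g \geq n_{m+1}\sum_{g=1}^m n_g \geq n_{m+1}+\sum_{g=1}^m n_g$, the last inequality again being $(n_{m+1}-1)(\sum_{g=1}^m n_g - 1)\geq 0$ together with $\sum_{g=1}^m n_g\geq m\cdot 2\geq 2 > 1$. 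Chasing the equality conditions through both inductive steps shows equality forces $m=2$ and $n_1=n_2=2$, and $n_g=2$ with $|\mathbb{L}_{1,g}|\geq 1$ pins down $|\mathbb{L}_{1,g}|=1$ and $|\mathbb{L}_{2,g}|=1$ (any larger $|\mathbb{L}_{1,g}|$ or $|\mathbb{L}_{2,g}|$ strictly increases $n_g$ beyond $2$).

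The only real obstacle I anticipate is getting \eqref{prop2-product} airtight, namely arguing rigorously that the survivors of $\mathbb{D}$-truncation are in bijection with tuples of per-cluster hypotheses. The subtlety is that a global fusion map $\theta$ is defined on a \emph{subset} $I_1$ of $\mathbb{L}_1$ and must be injective on all of $I_1$; one must check that the ``no inter-cluster pair'' constraint is exactly equivalent to: $\theta$ maps $I_1\cap\mathbb{L}_{1,g}$ into $\mathbb{L}_{2,g}$ for every $g$, with no interaction across $g$ — so injectivity decouples cluster-wise precisely because the images land in the disjoint sets $\{\mathbb{L}_{2,g}\}$. Once that decoupling is stated, the product formula and hence the whole proposition follow. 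I would present this bijection explicitly in one or two sentences and then invoke the elementary inequality above; everything else is routine.
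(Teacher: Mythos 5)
Your proposal is correct and follows essentially the same route as the paper: both rest on the factorization $N'_H=\prod_{g:\mathcal{C}_g\in C_{\text{I}}}|\mathcal{F}(\mathbb{L}_{1,g})\times\Theta_g|$ (the counting counterpart of Proposition 1, which the paper invokes through its decomposition of $\mathcal{F}(\mathbb{L}_1)\times\Theta-\mathbb{D}$), followed by the elementary fact that a product of integers $n_g\geq 2$ dominates their sum when there are at least two of them, with equality exactly for two factors both equal to $2$, which then pins down $|\mathbb{L}_{1,g}|=|\mathbb{L}_{2,g}|=1$ and $N_{\mathcal{C}_{\text{I}}}=2$. The only differences are cosmetic: you prove the elementary inequality by induction on the number of clusters whereas the paper argues via the function $\phi(a_1,\dots,a_N)=\prod_i a_i-\sum_i a_i$ evaluated at $a_i=2$ together with monotonicity in each variable, and your explicit justification of the product formula (injectivity of the fusion map decouples across clusters because the sets $\mathbb{L}_{2,g}$ are disjoint) makes rigorous a step the paper leaves implicit.
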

The proof of Proposition 2 is given in Appendix B. It shows that in most cases, the number of fusion hypotheses over all clusters is even smaller than that of only subtracting the inter-cluster hypotheses, which is the case of the truncated fused GMB density given in the previous subsection. Moreover, due to its parallelizable structure, the execution time only depends on the cluster with the largest number of hypothetic objects.

\subsection{Approximation error}
The only source of error in the independence approximation~(\ref{G-GCI-MB}) is due to the omitting of the inter-cluster hypotheses. The following proposition establishes the upper bound of the $L_1$ error between (\ref{G-GCI-MB}) and the original GMB density in (\ref{GCI-MB}). The $L_1$ error was used by Vo~et~al.~\cite{LMB_Vo2} to analyze the approximation error between the GLMB density and its truncated version.
\begin{Pro}\label{Pro_error}
Let $\|f\|_1\triangleq\int |f(X)|\delta X$ denotes the $L_1$-norm of $f:\mathcal{F}(\mathbb{X})\rightarrow\mathbb{R}$. The following results hold:

\noindent(1) the $L_1$-error between $\pi_{\omega}(\cdot)$ of (\ref{GCI-MB}) and ${\pi}'_{\omega}(\cdot)$ of (\ref{G-GCI-MB}) satisfies,
\begin{equation}
\begin{split}\label{errorBound}
 \left\| \pi_{\omega}(\cdot)-\pi'_{\omega}(\cdot)\right\|_1&\!\leqslant 
 2\sum_{(I_1,\theta)\in\mathbb{D}}w_\omega^{(I_1,\theta)}\\
 \!&\leqslant A \exp\left(\!-\!\gamma\right)
\end{split}
\end{equation}
where $\gamma$ is the GCI divergence threshold in (\ref{GCI-distance-threshold}) to define an bad association pair, and
\begin{align}
A=&2\!\!\sum_{(I_1,\theta)\in\mathbb{D}}\!\!K^{(I_1,\theta)}\big/ \eta_\omega\\
K^{(I_1,\theta)}=&\left(Q_1^{I_1}\right)^{\omega_1}\left(Q_2^{\theta(I_1)}\right)^{\omega_2};
\end{align}

\noindent(2) if the clustering threshold $\gamma\rightarrow\!+\infty$,   $\left\| \pi_{\omega}(\cdot)\!-\!{\pi'}_{\omega}(\cdot)\right\|_1\rightarrow\! 0$.
\end{Pro}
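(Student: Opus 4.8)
The plan is to establish the two inequalities in (\ref{errorBound}) in turn, and then obtain part (2) as an immediate consequence of the second inequality.

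\textbf{Step 1: the first inequality via a standard truncation argument.} First I would write $\pi_\omega$ and $\pi'_\omega$ as normalized mixtures over the fusion-hypothesis index set $\mathcal{F}(\mathbb{L}_1)\times\Theta$, with $\pi'_\omega$ supported on the complement of $\mathbb{D}$. Since each fused single-object density $p_\omega^{(\ell,\theta)}$ integrates to one and the per-hypothesis weights $w_\omega^{(I_1,\theta)}$ form a probability vector, the set integral of each multi-object exponential term is $1$, so $\|\pi_\omega\|_1=\|\pi'_\omega\|_1=1$ and $\eta'_\omega=\eta_\omega-\sum_{(I_1,\theta)\in\mathbb{D}}\widetilde w_\omega^{(I_1,\theta)}$, i.e. $\eta'_\omega/\eta_\omega=1-\sum_{(I_1,\theta)\in\mathbb{D}}w_\omega^{(I_1,\theta)}$. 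Writing $\pi_\omega-\pi'_\omega$ and collecting terms indexed by $\mathbb{D}$ (which appear only in $\pi_\omega$) and terms indexed by the complement (which appear in both, but with the two different normalizers), the triangle inequality for the set integral gives
\begin{equation}
\|\pi_\omega-\pi'_\omega\|_1\leq \sum_{(I_1,\theta)\in\mathbb{D}}w_\omega^{(I_1,\theta)}+\Bigl|1-\tfrac{\eta_\omega}{\eta'_\omega}\Bigr|\sum_{(I_1,\theta)\notin\mathbb{D}}w_\omega^{(I_1,\theta)}.
\end{equation}
The second sum is $1-\sum_{\mathbb{D}}w_\omega^{(I_1,\theta)}=\eta'_\omega/\eta_\omega$, and $|1-\eta_\omega/\eta'_\omega|=(\eta_\omega-\eta'_\omega)/\eta'_\omega=(\sum_{\mathbb{D}}w_\omega^{(I_1,\theta)})/(\eta'_\omega/\eta_\omega)$, so the product collapses to $\sum_{\mathbb{D}}w_\omega^{(I_1,\theta)}$, yielding the bound $2\sum_{(I_1,\theta)\in\mathbb{D}}w_\omega^{(I_1,\theta)}$. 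This is the familiar $L_1$ bound for renormalizing after truncation, exactly as in \cite{LMB_Vo2}.

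\textbf{Step 2: the second inequality via the ``bad'' association pair estimate.} Next I would bound each $w_\omega^{(I_1,\theta)}$ for $(I_1,\theta)\in\mathbb{D}$. By (\ref{fuse-w})--(\ref{fuse-w_p}) and (\ref{weight-GCI-divergence}),
\begin{equation}
w_\omega^{(I_1,\theta)}=\frac{K^{(I_1,\theta)}\prod_{\ell\in I_1}\exp\bigl(-d(\ell,\theta(\ell))\bigr)}{\eta_\omega}.
\end{equation}
Since $(I_1,\theta)\in\mathbb{D}$ means it contains at least one inter-cluster (hence ``bad'') association pair $(\ell^\star,\theta(\ell^\star))$ with $d(\ell^\star,\theta(\ell^\star))>\gamma$, and all other factors $\exp(-d(\ell,\theta(\ell)))\leq 1$ (the GCI divergence is nonnegative, being an extension of the Bhattacharyya distance), we get $\prod_{\ell\in I_1}\exp(-d(\ell,\theta(\ell)))\leq\exp(-\gamma)$. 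Therefore $w_\omega^{(I_1,\theta)}\leq K^{(I_1,\theta)}\exp(-\gamma)/\eta_\omega$. Summing over $\mathbb{D}$ and multiplying by $2$ gives $2\sum_{\mathbb{D}}w_\omega^{(I_1,\theta)}\leq\bigl(2\sum_{\mathbb{D}}K^{(I_1,\theta)}/\eta_\omega\bigr)\exp(-\gamma)=A\exp(-\gamma)$, which is exactly the stated constant $A$.

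\textbf{Step 3 and the main obstacle.} Part (2) then follows at once: $A$ does not depend on $\gamma$ (it is a finite sum of nonnegative terms divided by the fixed normalizer $\eta_\omega$), so letting $\gamma\to+\infty$ forces $\exp(-\gamma)\to0$ and hence $\|\pi_\omega-\pi'_\omega\|_1\to0$. I expect the only genuinely delicate point to be Step 1: one must be careful that the two densities are normalized with \emph{different} constants ($\eta_\omega$ versus $\eta'_\omega$), so the difference is not simply the sum of the discarded weights, and the algebra relating $\eta'_\omega/\eta_\omega$ to $1-\sum_{\mathbb{D}}w_\omega^{(I_1,\theta)}$ must be tracked exactly to see the factor $2$ emerge rather than something larger. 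A secondary point worth stating explicitly is the nonnegativity of the GCI divergence $d(\ell,\ell')\geq0$, which is what makes each non-bad factor $\leq1$ in Step 2; this is where Remark on the Bhattacharyya interpretation is used. Everything else is routine bookkeeping with set integrals.
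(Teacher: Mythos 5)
Your proposal is correct and follows essentially the same route as the paper's Appendix C: the same split of $\pi_\omega-\pi'_\omega$ into discarded ($\mathbb{D}$) and retained hypotheses, the same normalizer bookkeeping that collapses to $2\sum_{(I_1,\theta)\in\mathbb{D}}w_\omega^{(I_1,\theta)}$, and the same per-hypothesis estimate $\prod_{\ell\in I_1}\exp\bigl(-d(\ell,\theta(\ell))\bigr)\leq\exp(-\gamma)$ via nonnegativity of the GCI divergence (a step the paper leaves implicit). One minor caveat: $A$ does depend on $\gamma$ through $\mathbb{D}$, but since $K^{(I_1,\theta)}\leq 1$ and the total number of hypotheses is fixed, $A$ is uniformly bounded, so your argument for part (2) goes through unchanged.
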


See Appendix C for the proof of Proposition 3. It follows from the above results that the upper bound of $L_1$-error between (\ref{GCI-MB}) and (\ref{G-GCI-MB}) is determined by the weights of inter-cluster hypotheses in $\mathbb{D}$. This further verifies that the only source of error in the approximation given by equation~(\ref{G-GCI-MB}) is the truncation of inter-cluster hypotheses. In addition, it can be seen from (\ref{errorBound}) that the upper bound of the $L_1$-error decreases exponentially with the clustering threshold.
When the clustering threshold $\gamma\!\rightarrow\!+\infty$, the $L_1$-error will be zero, which indicates that a sufficiently large clustering threshold  can guarantee a high accuracy of the independence approximation.

\begin{Rem}
The $L_1$-error between (\ref{GCI-MB}) and (\ref{G-GCI-MB}) can also be viewed as a measure of statistical dependence between the factorized GMB densities of different clusters. The smaller this $L_1$-error is, the weaker the dependence between these factorized densities will be. When this $L_1$-error approaches zero, the factorized densities of different clusters become closer to being mutually independent~\cite{book_mahler}, and the factorization~(\ref{G-GCI-MB}) approaches the exact density.
\end{Rem}

\subsection{Disjoint-set data structure-based fast clustering}
As described in Section IV-B, by carefully clustering the Bernoulli components into isolated clusters according to GCI divergence, and discarding the inter-cluster hypotheses, we can obtain an efficient approximation of the GMB density. There are two challenges in the designing of the clustering routine. The first challenge is that clustering needs to be accomplished beforehand without exhausting all the fusion hypotheses. The second challenge is to seek an isolated clustering with the largest number of clusters, referred to as the \textit{Largest Isolated Clustering} (LIC) as defined below.

\begin{Def}
	Consider a clustering of index sets $\mathbb{L}_1$ and $\mathbb{L}_2$, $C=\{\mathcal{C}_1,\cdots,\mathcal{C}_{N_{\mathcal{C}}}\}$. For a given cluster $\mathcal{C}_g$, $g\in [1:N_{\mathcal{C}}]$, consider two subsets
\begin{equation}\mathbb{L}_{1,g}^a\subseteq\mathbb{L}_{1,g},\ \  \mathbb{L}_{2,g}^a\subseteq\mathbb{L}_{2,g}
\end{equation}
 and define
\begin{equation}
	\mathbb{L}_{1,g}^b \triangleq \mathbb{L}_{1,g} \backslash \mathbb{L}_{1,g}^a,\ \
	\mathbb{L}_{2,g}^b \triangleq \mathbb{L}_{2,g} \backslash \mathbb{L}_{2,g}^a.
\end{equation}
	$C$ is an indivisible clustering if for any cluster $\mathcal{C}_g$ and any subsets $\mathbb{L}_{1,g}^a$ and $\mathbb{L}_{2,g}^a$ that are not both empty, the cluster $(\mathbb{L}^a_{1,g}, \mathbb{L}^a_{2,g})$ and the cluster $(\mathbb{L}^b_{1,g}, \mathbb{L}^b_{2,g})$ are not mutually isolated.
\end{Def}

\begin{Def}\label{LIG}
A clustering of index sets $\mathbb{L}_1$ and $\mathbb{L}_2$,  $C=\{\mathcal{C}_1,\cdots,\mathcal{C}_{N_{\mathcal{C}}}\}$  is the LIC,  if it is both an isolated clustering, and indivisible.
\end{Def}
\begin{Rem}
In an indivisible clustering $C$,  each cluster $\mathcal{C}_g$ can not be divided into smaller size isolated clusters any more, which guarantees the largest number of isolated clusters.
\end{Rem}
Note that based on Definitions 4 and 5, the isolated clusterings are not unique, but for any pair of index sets $\mathbb{L}_1$ and $\mathbb{L}_2$, there is only one LIC. Truncation of the inter-cluster hypotheses of the LIC guarantees that all the insignificant GCI-MB fusion hypotheses are discarded.

In the following, by modeling the underline data structure of hypothetic objects among all sensors as a undirected graph, we present how the formation of the LIC is tantamount to seeking connected components of an undirected graph dynamically, which can be solved by the disjoint-set data structure-based fast solver with a computational expense that is only polynomial in the number of hypothetic objects. In addition, using this solver, the search for connected components can be performed at an early stage, since it only takes the pairwise distances between hypothetic objects of two sensors as inputs.

\subsubsection{Structure modeling of hypothetic objects}
This subsection presents the construction of  the undirected graph  using the hypothetic objects of sensors $s=1,2$, i.e., $(\mathbb{L}_1,\mathbb{L}_2)$, and their mutual relationship. Specifically, the constructions of vertices, edges and the paths of the undirected graph are as follows:
\begin{itemize}
	\item[$\bullet$] \textit{The vertex set} $V$: Each hypothetic object of sensor 1, $\ell\in\mathbb{L}_1$ is considered as a vertex, and then all the hypothetic objects of sensor 1 form the vertex set, i.e., $V=\mathbb{L}_1$.

\item[$\bullet$] \textit{The edge set} $E$: To define the relationship between each pair of vertices, firstly, for each hypothetic object $\ell\in\mathbb{L}_1$, extract all associated hypothetic objects from $\mathbb{L}_2$ which fall within the gate, i.e.,
\begin{equation}
\begin{split}\label{cluster_1}
\Psi^{(\ell)}_{2}=\{\ell'\in\mathbb{L}_2: d(\ell,\ell')\leq\gamma\}.
\end{split}
\end{equation}
If any two hypothetic objects $\ell_1\neq \ell_2\in\mathbb{L}_1$ have common associated hypothetic objects from sensor 2, i.e.,
\begin{equation}
\begin{split}\label{edge}
\Psi^{(\ell_1)}_{2}\cap\Psi^{(\ell_2)}_{2}\neq\emptyset,
\end{split}
\end{equation}
then $\ell_1$ and $\ell_2$ are paired to be an edge,  $e=(\ell_1,\ell_2)$. All pairs of hypothetic objects in $\mathbb{L}_1$ satisfying (\ref{edge}) form the edge set $E$,
\begin{equation}
E=\{(\ell_1,\ell_2)\in\mathbb{L}_1^2: \ell_1\neq\ell_2\ \ , \Psi_{2}^{(\ell_1)}\cap\Psi_{2}^{(\ell_2)}\neq\emptyset\}.
\end{equation}

\item[$\bullet$] \textit{A path in} $G(V,E)$ is a walk in which all vertices (except possibly the first and last) are distinct, and all edges are distinct. A path of length $K$ in a graph is an alternating sequence of vertices and edges, i.e., $\ell_0, e_0,\ell_1,e_1\cdots,e_{K-1},\ell_{K}$, where $e_{k-1}$ connects vertices $\ell_{k-1}$ and $\ell_k$, $k=1,\cdots,K$, $\ell_0$ and $\ell_{K}$ are the end-points, and $\ell_1,\cdots,\ell_{K-1}$ are the non-end-points.
\end{itemize}
Fig. \ref{Model-undirected-graph_2} shows a sketch map for the construction of the undirected graph  using the hypothetic objects of sensors 1 and  2.
\begin{figure}[!h]
  \centering
\includegraphics[width=8cm]{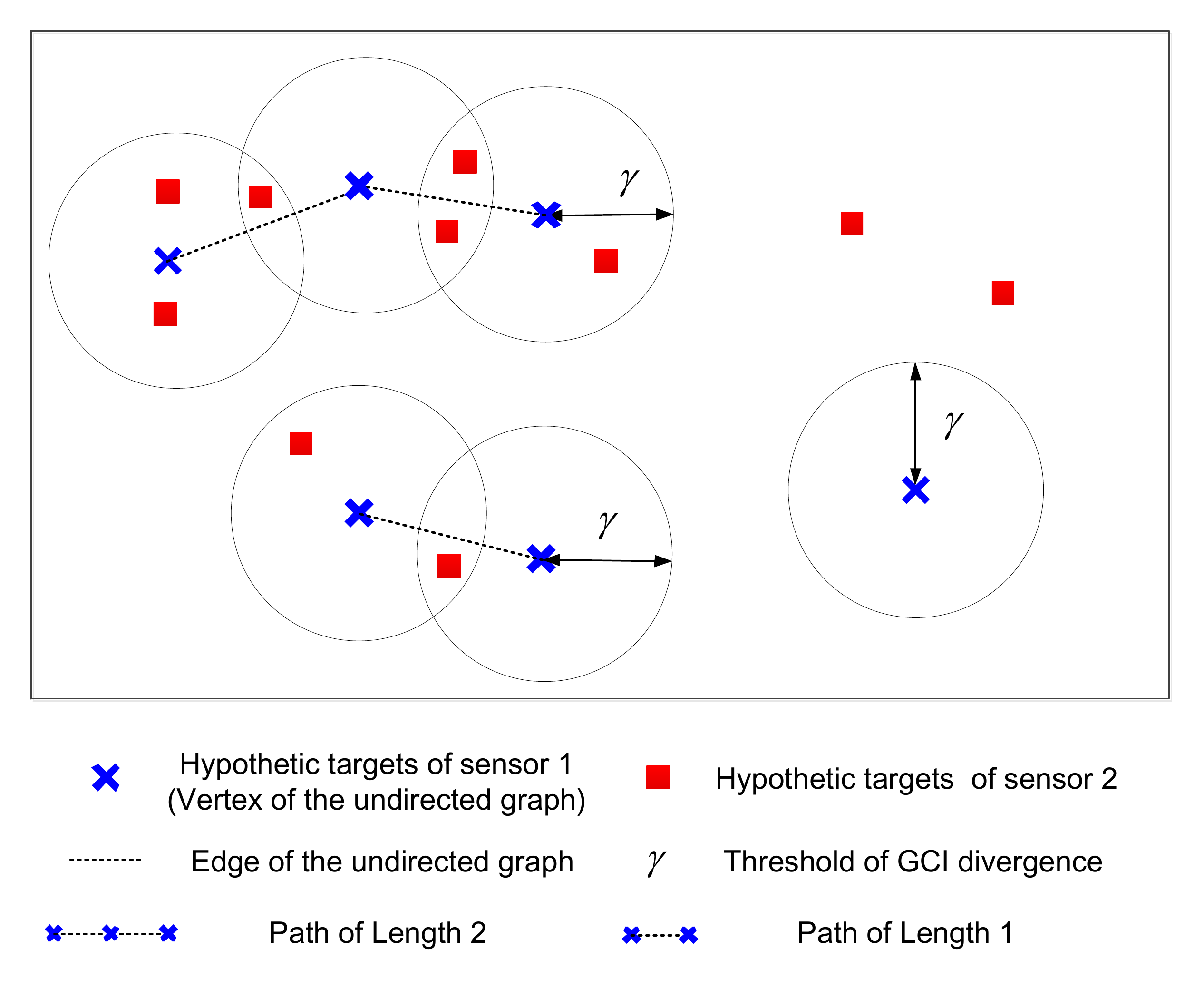}
\caption{A sketch map for the construction of the undirected graph  using the hypothetic objects of sensors 1 and  2.\label{Model-undirected-graph_2}}
\end{figure}
\subsubsection{Clustering based on union-find algorithm}
This subsection details how to get the LIC by utilizing the undirected graph $G(V,E)$. Recall (\ref{ccc}), $C_{\text{III}}$ can be constructed as follow:
\begin{equation}\label{type3}
C_{\text{III}}=\{(\emptyset, \{\ell'\}):\ell'\in \mathbb{L}_2\backslash (\cup_{\ell\in\mathbb{L}_1}\Psi_2^{(\ell)})\}.\end{equation}
The counterpart of sensor 2 in any cluster belonging to $C_{\text{III}}$ is a singleton whose element is an individual hypothetic object of sensor 2 having no associated hypothetic objects from $\mathbb{L}_1$. The choice of the singleton is to ensure the largest number of clusters.

Then, construction of $C_{\text{I}}$ and $C_{\text{II}}$ is the problem of seeking connected components of the undirected graph $G(V,E)$. A connected component of an undirected graph is a subgraph that satisfies two conditions:

-- any two vertices are connected to each other by a path;

-- it is connected to no additional vertices in the supergraph.

\begin{algorithm}[bt]\label{uion-find}
\caption{Union\_Find\_Algorithm.}
\textbf{Input: } The undirected graph $G(V,E)$\;
\For {each $\ell\in V$}
{\textsc{Makeset}$(\ell)$}  
\For {each $(\ell_1,\ell_2)\in E$}
{
\If{\textsc{Find}$(\ell_1)\neq$\textsc{Find}$(\ell_2)$  }
{\textsc{Union}$(\ell_1,\ell_2)$}}
\textbf{Output: } Connected components of $G$.
\end{algorithm}

Dynamic tracking of the connected components of a graph as vertices and edges is a straightforward application of disjoint-set data structures {{\cite{Unionfind3}}}. A disjoint-set data structure maintains a dynamic collection of disjoint subsets that together cover the whole set. The underline data structure of each set is typically a rooted-tree, identified by a representative element of the set, also called the parent node. The two main operations are to \emph{\textsc{Find}} which set a given element belongs to by locating its father node and to replace two existing sets with their \textit{\textsc{Union}}.  We outline a primary \textsc{Union-Find} algorithm from the viewpoint of its use in finding connected components of the graph $G(V,E)$ as shown in {Algorithm 1}.

The procedure of seeking connected components can be expanded as follows. Firstly, initialize $|V|$ trees, each of which takes an individual element of $V$ as the parent node and a rank of 0, which is exactly the operation \textsc{Makeset}. Then, for each edge $(\ell_1,\ell_2)\in E$, find their parent nodes through the operation \textsc{Find}. If parent nodes of $\ell_1$ and $\ell_2$ are the same, i.e. if they have already been in the same tree, then no operation is needed; otherwise, merge the trees which $\ell_1$ and $\ell_2$ belong to, respectively, using the operation \textsc{Union}. Finally, we get all the connected components of {$G(V,E)$}, denoted by $G_g(V_g,E_g),g=1,\cdots,N_G$.

Accordingly, $\mathbb{L}_1$ is partitioned into $N_G$ disjoint subsets, i.e., \begin{equation}\mathbb{L}_{1,g}=V_g, g=1,\cdots,N_G;
\end{equation}
and for each $\mathbb{L}_{1,g}$, all the associated hypothetic objects from sensor 2 are merged, i.e.,
\begin{equation}\label{union-find-clustering}\mathbb{L}_{2,g}=\cup_{\ell\in \mathbb{L}_{1,g}} \Psi_{2}^{(\ell)}.
\end{equation}
Then each pair of $\mathbb{L}_{1,g}$ and $\mathbb{L}_{2,g}$ forms an isolated cluster $\mathcal{C}_g=(\mathbb{L}_{1,g},\mathbb{L}_{2,g})$, and all the isolated clusters form the clusters of types $C_{\text{I}}\cup C_\text{II}$ in the finalized LIC,
\begin{equation}\label{type1-2}
C_{\text{I}}\cup C_\text{II}=\{\mathcal{C}_1,\cdots,\mathcal{C}_{N_G}\}.
\end{equation}
\begin{Rem} Any vertex satisfying $\Psi_2^{(\ell)}=\emptyset$ is a connected component, since it cannot be linked to any other vertex. Hence, it is easy to separate $C_{\emph{II}}$ from the union in (\ref{type1-2}), i.e.,
\begin{equation}\label{type2}
C_{\emph{II}}=\{(\{\ell\},\emptyset):\Psi_2^{(\ell)}=\emptyset,\ell\in\mathbb{L}_1\},
\end{equation}
for which the counterpart of sensor 1 in any cluster is a singleton whose element is an individual hypothetic object having no associated hypothetic object from sensor 2.
\end{Rem}
\begin{Pro}\label{Pro_cluster}
The union $C=C_\emph{I}\cup C_\emph{II} \cup C_\emph{III}$ with $C_\emph{I}$, $C_\emph{II}$ and $C_\emph{III}$, respectively,  given in (\ref{type1-2}) and (\ref{type3}) is the LIC.
\end{Pro}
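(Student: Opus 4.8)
The plan is to verify that the construction in \eqref{type1-2} and \eqref{type3} satisfies both requirements of Definition \ref{LIG}: that $C=C_{\text{I}}\cup C_{\text{II}}\cup C_{\text{III}}$ is (a) a valid clustering, (b) an isolated clustering, and (c) indivisible. First I would check the clustering axioms of Definition \ref{clustering}. The subsets $\mathbb{L}_{1,g}=V_g$ for $g=1,\dots,N_G$ are disjoint by the defining property of connected components of $G(V,E)$, and since every $\ell\in\mathbb{L}_1$ is placed in exactly one connected component (even an isolated vertex with $\Psi_2^{(\ell)}=\emptyset$ is its own component), the $V_g$'s partition $V=\mathbb{L}_1$; the type-$C_{\text{III}}$ clusters contribute empty first-components, so $\mathbb{L}_1=\bigcup_g\mathbb{L}_{1,g}$ over all of $C$. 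For the second sensor, \eqref{union-find-clustering} assigns $\mathbb{L}_{2,g}=\bigcup_{\ell\in V_g}\Psi_2^{(\ell)}$; I must show these are pairwise disjoint across $g$, which follows precisely from the edge definition \eqref{edge}: if $\ell_1\in V_g$ and $\ell_2\in V_{g'}$ shared an associated object from sensor 2, they would be joined by an edge and hence lie in the same connected component, contradicting $g\neq g'$. The type-$C_{\text{III}}$ clusters pick up exactly the elements of $\mathbb{L}_2$ not associated with any $\ell\in\mathbb{L}_1$, each as its own singleton, so together $\mathbb{L}_2=\bigcup_{g}\mathbb{L}_{2,g}$ over all of $C$ and all second-components remain disjoint. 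Non-emptiness of $\mathbb{L}_{1,g}\cup\mathbb{L}_{2,g}$ is immediate in each case.

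Next I would show $C$ is isolated, i.e., that every inter-cluster association pair is ``bad'' in the sense of \eqref{GCI-distance-threshold}. Take $\ell\in\mathbb{L}_{1,g}$ and $\ell'\in\mathbb{L}_{2,g'}$ with $g\neq g'$. If $d(\ell,\ell')\leq\gamma$ then $\ell'\in\Psi_2^{(\ell)}$, so by \eqref{union-find-clustering} $\ell'\in\mathbb{L}_{2,g}$; but $\ell'$ also lies in $\mathbb{L}_{2,g'}$, contradicting the disjointness just established (and if $g'$ indexes a $C_{\text{III}}$ cluster, $\ell'$ by construction has no associated sensor-1 object, contradicting $\ell'\in\Psi_2^{(\ell)}$). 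Hence $d(\ell,\ell')>\gamma$, which is exactly the criterion \eqref{clustering-criterion}--\eqref{clustering-criterion2} for an isolated clustering.

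The main obstacle is indivisibility. Here I would argue by contradiction: suppose some cluster $\mathcal{C}_g=(\mathbb{L}_{1,g},\mathbb{L}_{2,g})$ admits a splitting into $(\mathbb{L}_{1,g}^a,\mathbb{L}_{2,g}^a)$ and $(\mathbb{L}_{1,g}^b,\mathbb{L}_{2,g}^b)$, not both parts empty, that are mutually isolated, meaning every cross pair has GCI divergence exceeding $\gamma$. A $C_{\text{II}}$ cluster $(\{\ell\},\emptyset)$ or a $C_{\text{III}}$ cluster $(\emptyset,\{\ell'\})$ is a singleton and cannot be split nontrivially, so $\mathcal{C}_g$ must be a $C_{\text{I}}$ cluster arising from a connected component $G_g(V_g,E_g)$. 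The key is that connectedness of $G_g$ forbids such a split: I would take a vertex $\ell_a\in\mathbb{L}_{1,g}^a$ and a vertex $\ell_b\in\mathbb{L}_{1,g}^b$ (handling first the sub-case where one of $\mathbb{L}_{1,g}^a,\mathbb{L}_{1,g}^b$ is empty, where the split must separate sensor-2 objects that are all associated to sensor-1 objects lying on one side, immediately violating mutual isolation via those associations), and use a path $\ell_a=\ell_0,e_0,\ell_1,\dots,e_{K-1},\ell_K=\ell_b$ in $G_g$; somewhere along this path a consecutive pair $\ell_{k-1},\ell_k$ straddles the partition with, say, $\ell_{k-1}\in\mathbb{L}_{1,g}^a$ and $\ell_k\in\mathbb{L}_{1,g}^b$. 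The edge $e_{k-1}=(\ell_{k-1},\ell_k)$ means, by \eqref{edge}, there is a common $\ell'\in\Psi_2^{(\ell_{k-1})}\cap\Psi_2^{(\ell_k)}$; whichever side $\ell'$ falls on, it forms an inter-part association pair with the vertex on the opposite side having $d(\cdot,\ell')\leq\gamma$, contradicting mutual isolation. I would also need to rule out the sub-case where $\mathbb{L}_{1,g}^a=\mathbb{L}_{1,g}$ but $\mathbb{L}_{2,g}^a\subsetneq\mathbb{L}_{2,g}$ with $\mathbb{L}_{2,g}^b\neq\emptyset$: any $\ell'\in\mathbb{L}_{2,g}^b$ belongs to $\mathbb{L}_{2,g}=\bigcup_{\ell\in V_g}\Psi_2^{(\ell)}$, hence $d(\ell,\ell')\leq\gamma$ for some $\ell\in V_g=\mathbb{L}_{1,g}=\mathbb{L}_{1,g}^a$, again contradicting mutual isolation. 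Thus no cluster splits, $C$ is indivisible, and together with the isolated-clustering property established above, $C$ is the LIC; uniqueness of the LIC (noted in the text after Definition \ref{LIG}) then confirms $C$ is \emph{the} LIC.
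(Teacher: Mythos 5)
Your proof is correct and follows essentially the same route as the paper: isolation is obtained from the disjointness of the sensor-2 association sets $\mathbb{L}_{2,g}=\bigcup_{\ell\in V_g}\Psi_2^{(\ell)}$ across connected components, indivisibility of $C_{\text{I}}\cup C_{\text{II}}$ clusters from the path-straddling-edge contradiction (which is exactly the content of the paper's Lemma~1), and the $C_{\text{III}}$ case from the singleton construction. Your explicit verification of the clustering axioms and of the sub-cases where one side of the split has an empty sensor-1 part is a slightly more careful bookkeeping of steps the paper leaves implicit, but the underlying argument is the same.
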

The proof of Proposition~\ref{Pro_cluster} is given in Appendix D. Proposition~\ref{Pro_cluster} ensures that one can obtain a unique LIC using the disjoint-set data structure-based fast clustering, thus building exactly the independence approximation of form (\ref{G-GCI-MB}).

\begin{Rem}
Algorithm 1 only shows the primary \textsc{Union-Find} algorithm. Actually, there exist many classical theoretical studies  providing enhanced implementations of the \textsc{Union-Find} algorithm, including Link-by-Rank or Link-by-Size for the union operation and Path-Compression, Path-Splitting, or Path-Halving for the find operation~\cite{unionfind_3,unionfind_2}. The worst-case time complexity of the enhanced union-find algorithm is $\mathcal{O}(n+m\alpha(m,n))$ for any combination of $m$ \textsc{Makeset}, \textsc{Union}, and \textsc{Find} operations on $n$ elements, where $\alpha$ is the very slowly growing
inverse of Ackermann¡¯s function.
\end{Rem}

\subsection{Summary}
The combination of the independence approximation~(\ref{G-GCI-MB}) and Proposition~\ref{Pro_cluster} enables a computationally efficient GCI-MB fusion algorithm. The fast clustering algorithm is employed at an early stage to form the LIC and thus obtain the approximated density. Then GCI-MB fusion is performed within each smaller clusters of LIC independently and parallelizablly by exploiting the structure of the approximated density. We refer to this fusion method as the parallelizable GCI-MB (P-GCI-MB) to emphasize its intuitive implementation structure.

Note that Proposition \ref{Pro_cluster} also ensures that the only approximation error arises from discarding the inter-cluster hypotheses. The analysis in Proposition \ref{Pro_error} further shows that as long as the clustering threshold is sufficiently large, the approximation error is negligible. Hence, the proposed P-GCI-MB method is able to reduce the computational expense as well as memory requirements with slight and bounded error.
\begin{Rem}
Theoretically in the worst case where all hypothetic objects are in close proximity, it may not be possible to partition objects into smaller clusters. In this case, the complexity of P-GCI-MB is the same as that of the original GCI-MB fusion.
\end{Rem}

\section{Implementation of P-GCI-MB Fusion}
In this section, two common implementation approaches of the P-GCI-MB Fusion,  i.e., GM and SMC approaches are presented. The GM approach is suitable for the case of linear or weak non-linear models, while the SMC approach is general enough for both linear and non-linear models.  Compared with SMC approach, the advantage of GM approach lies on its lower load in terms of both local (in-node) computations and inter-node data
communications.



\subsection{The GM implementation}
At the first step of the P-GCI-MB fusion, the quantities needed to calculate are the distances (i.e., GCI divergences)  $d(\ell,\ell'), (\ell,\ell')\in\mathbb{L}_1\times\mathbb{L}_2$ in (\ref{GCI-distance}). Then, we need to calculate $p_\omega^{(\ell,\theta)}$ and $w^{(I_{1},\theta)}$ for each fusion hypothesis $(I_{1},\theta)$ of each cluster. Since their calculations involve the integrations which do not have closed form solutions in general, we resort to GM realization. Having the results of quantities $d(\ell,\ell')$, $p_\omega^{(\ell,\theta)}$ and $w^{(I_{1},\theta)}$, other terms required in P-GCI-MB can be calculated accordingly.
\subsubsection{GM evaluation of $d(\ell,\ell')$}
The pairwise distances between hypothetic tracks of two sensors need to be computed first since they are the inputs to the clustering algorithm in order to obtain the LIC. Using the GM approximation, the posterior of the $s$-th local sensor can be expressed as $\{(r_{s}^{(\ell)},p_s^{(\ell)}(\bx))\}_{\ell\in\mathbb{L}_s}$, with each probability density $p_s^{(\ell)}(\bx)$ being represented as a mixture of Gaussian components:
 \begin{equation}
\begin{split}\label{GMs}
p_{s}^{(\ell)}(\bx)={\sum}_{j=1}^{J_{s}^{(\ell)}}\alpha_{s}^{(\ell,j)}\mathcal{N}\left(\bx;m_{s}^{(\ell,j)},P_{s}^{(\ell,j)}\right)
\end{split}
\end{equation}
where $\mathcal{N}\left(\bx;m,P\right)$ denotes a Gaussian density with mean $m$ and covariance $P$, $J_{s}^{(\ell)}$ is the number of Gaussian components for the $\ell$-th Bernoulli component, $\alpha_{s}^{(\ell,j)}$ is the weight of the $j$-th Gaussian component for the $\ell$-th Bernoulli component.

Observing (\ref{p_l}) and (\ref{GCI-distance}), we find that calculation of both $p_\omega^{(\ell,\theta)}$ and $d(\ell,\ell')$ involves integrating the Gaussian mixture in~(\ref{GMs}) taken to the power of $\omega_s$, which in general has no analytical solution. In this respect, we adopt the principled approximation suggested by~Battistelli~et~al.\cite{Battistelli} and Julier~\cite{July},
 \begin{equation}\label{app_gm}
\begin{split}
&\left[{\sum}_{j=1}^{J_{s}^{(\ell)}}\alpha_{s}^{(\ell,j)}\mathcal{N}\left(\bx;m_{s}^{(\ell,j)},P_{s}^{(\ell,j)}\right)\right]
^{\omega_s}\\
&\hspace{17mm}\approxeq {\sum}_{j=1}^{J_{s}^{(\ell)}}\left[\alpha_{s}^{(\ell,j)}\mathcal{N}\left(\bx;m_{s}^{(\ell,j)},P_{s}^{(\ell,j)}\right)\right]^{\omega_s}
\end{split}
\end{equation}
with the condition that the Gaussian components of $p_{s}^{(\ell)}(x)$ are well separated relatively to their corresponding covariances. If this condition is not satisfied, one can either perform merging before fusion (this is possible, since, for an MB, each probability  density $p_s^{(\ell)}(\cdot)$ is corresponding to a  single hypothetic object) or use other approximations, e.g., replacing the GM representation by a sigma-point approximation \cite{fractional-power}.

Using (\ref{app_gm}), the term $p_1^{(\ell)}(\bx)^{\omega_1}p_2^{(\ell')}(\bx)^{\omega_2}$ is expressed as
  \begin{equation}
\begin{split}\label{density_Bernoulli}
&p_1^{(\ell)}(\bx)^{\omega_1} p_2^{(\ell')}(\bx)^{\omega_2} \\
 & \hspace{6mm}= \sum_{j=1}^{J_{1}^{(\ell)}}\sum_{j'=1}^{J_{2}^{(\ell')}}\alpha_{\omega}^{(\ell,j,\ell',j')}\mathcal{N}\left(\bx;m_{12}^{(\ell,j,
\ell',j')},P_{12}^{(\ell,j,\ell',j')}\right)
\end{split}
\end{equation}
where
\begin{align}
\alpha_{\omega}^{(\ell,j,\ell',j')}\!=&\widetilde{\alpha}_{\omega}^{(\ell,j,\ell',j')}
\mathcal{N}\!\!\left(\!\!m_{1}^{(\ell,j)}\!\!-\!m_{2}^{(\ell',j')};0,\!\frac{P_{1}^{(\ell,j)}}{\omega_1}\!\!+\!\!\frac{P_{2}^{(\ell',j')}}{\omega_2}\!\!\right)\\
P_{12}^{(\ell,j,\ell',j')}\!=\!&\left([P_{1}^{(\ell,j)}]^{-1}+[P_{2}^{(\theta(\ell),j')}]^{-1}\right)^{-1}\\
m_{12}^{(\ell,j,\ell',j')}\!=&P_{12}^{(\ell,j)}\left([P_{1}^{(\ell,j)}]^{-\!1}m_{1}^{(\ell,j)} \!\!+\![P_{2}^{(\ell',j')}]^{-\!1}m_{2}^{(\ell',j')} \right)
\end{align}
with
\begin{align}
\widetilde{\alpha}_{\omega}^{(\ell,j,\ell',j')}&=(\!\alpha_{1}^{(\ell,j)}\!)^{\omega_1}(\!\alpha_{2}^{(\ell',j')}\!)^{\omega_2}\!\rho(P_{1}^{(\ell,j)}\!, \!\omega_1)\!\rho(P_{2}^{(\ell',j')}\!, \!\omega_2)\\
 \rho(P,\omega)&=\sqrt{\det[2\pi P\omega^{-1}](\det[2\pi P])^{-\omega}}.
\end{align}
Substituting (\ref{density_Bernoulli}) into (\ref{GCI-distance}) and performing integration, we have
\begin{equation}\label{GCI-distance-GM}
d(\ell,\ell')=-\log \left({\sum}_{j=1}^{J_{1}^{(\ell)}}{\sum}_{j'=1}^{J_{2}^{(\ell')}}\alpha_{\omega}^{(\ell,j,\ell,j')}\right).
\end{equation}

\subsubsection{Evaluation of $p_\omega^{(\ell,\theta)}(\cdot)$ and $w_\omega^{(I_1,\theta)}$} After discovering the LIC based on the pairwise distances~(\ref{GCI-distance-GM}), for each cluster $g$, let $\ell\in\mathbb{L}_{1,g}$, $\theta\in\Theta_g$. Substituting  $\ell'=\theta(\ell)$ into (\ref{density_Bernoulli}),  we can obtain the numerator of density $p_\omega^{(\ell,\theta)}(\cdot)$ as
  \begin{equation}
\begin{split}\label{density_Bernoulli_theta}
&p_1^{(\ell)}(\bx)^{\omega_1} p_2^{(\theta(\ell))}(\bx)^{\omega_2} \\
=&  \sum_{j=1}^{J_{1}^{(\ell)}}\sum_{j'=1}^{J_{2}^{(\theta(\ell))}}\alpha_{\omega}^{(\ell,j,\theta(\ell),j')}\mathcal{N}\left(\bx;m_{12}^{(\ell,j,\theta(\ell),j')},\!P_{12}^{(\ell,j,\theta(\ell),j')}\right).
\end{split}
\end{equation}
Further, the parameter $Z_\omega^{(\ell,\theta)}$ in (\ref{Z_w}) can be computed by
  \begin{equation}
\begin{split}\label{integrate-gm}
Z_\omega^{(\ell,\theta)}=\exp(-d(\ell,\theta(\ell)).
\end{split}
\end{equation}
By substituting (\ref{integrate-gm}) and (\ref{density_Bernoulli_theta}) into (\ref{fuse-p_p}), the parameter $p_\omega^{(\ell,\theta(\ell))}$ turns out to be
  \begin{equation}
\begin{split}\label{fuse-p-gm}
&p_{\omega}^{(\ell,\theta(\ell))}(\bx)=\frac{p_1^{(\ell)}(\bx)^{\omega_1} p_2^{(\theta(\ell))}(\bx)^{\omega_2}}{Z_\omega^{(\ell,\theta)}}\\
=&\frac{\sum_{j=1}^{J_{1}^{(\ell)}}\sum_{j'=1}^{J_{2}^{(\theta(\ell))}}\alpha_{\omega}^{(\ell,j,\theta(\ell),j')}\mathcal{N}\left(\bx;m_{12}^{(\ell,j,\theta(\ell),j')},P_{12}^{(\ell,j,\theta(\ell),j')}\right)}
 {\exp(-d(\ell,\theta(\ell))}.
\end{split}
\end{equation}
By combing (\ref{integrate-gm}) with (\ref{fuse-w_p}), (\ref{K-1}), the un-normalized weight $\widetilde{w}_\omega^{(I_1,\theta)}$ for each $(I_1,\theta)\in\mathcal{F}(\mathbb{L}_{1,g})\times\Theta_{g}(I_1)$ and the normalization constant $\eta_\omega$ are calculated as
\begin{align}
\label{fuse-w_p-gm}
\widetilde{w}_\omega^{(I_1,\theta)}&=\left(Q_1^{I_1}\right)^{\omega_1}\left(Q_2^{\theta(I_1)}\right)^{\omega_2}\prod_{\ell\in I_1}\exp(-d(\ell,\theta(\ell))\\
\label{K-1-gm}\!\!\eta_\omega&={\sum}_{I_1\in \mathcal{F}(\mathbb{L}_1)}{\sum}_{\theta\in\Theta_g(I_1)}\widetilde{w}_\omega^{(I_1,\theta)}.
\end{align}
Thus, by substituting (\ref{fuse-w_p-gm}) and (\ref{K-1-gm}) into (\ref{fuse-w}), the GM representation of the normalized weight $w_\omega^{(I_1,\theta)}$ is calculated.
\subsection{SMC implementation}
The SMC realization of GCI-MB was previously given in \cite{GCI-MB}. Since the P-GCI-MB fusion contains parallelizable GCI-MB fusion operations with several smaller clusters, the SMC implementation of P-GCI-MB fusion for each cluster is straightforward and similar to \cite{GCI-MB}. As for the pair-wise distances $d(\ell,\ell'), (\ell,\ell')\in\mathbb{L}_{1}\times\mathbb{L}_{2}$ which are the outputs of the clustering algorithm,  one  can also refer to the SMC evaluation of $Z_\omega^{(\ell,\theta)}$ given in \cite{GCI-MB},  since the distance $d(\ell,\theta(\ell))$ is functionally related to $Z_{\omega}^{(\ell,\theta)}$ following (\ref{integrate-gm}).
%
\subsection{Pseudo-code of P-GCI-MB fusion }
The psedo-code of P-GCI-MB fusion are given in Algorithms 2 and 3. Specifically, {Algorithm 2} forms the LIC using the disjoint-set data structure; and {Algorithm 3} describes the whole fusion algorithm in a two-sensor case. This implementation can also be easily extended to $N_s\geqslant 2$ sensors by sequentially applying the pairwise fusion (\ref{r_l}) and (\ref{p_l}) under each clusters $N_s-1$ times, where the ordering of pairwise fusions is irrelevant. Similar mechanism has been used in CPHD filters and LMB filters based GCI fusions \cite{Battistelli, LMB-Reuter}. In addition, by exploiting the consensus approach, the P-GCI-MB fusion can be also implemented in a fully distributed way, similar to \cite{Battistelli}. The consensus based P-GCI-MB fusion achieves the global P-GCI-MB fusion over  the whole  network by iterating the local fusion steps among neighbouring sensor nodes.
\begin{algorithm}[hbt]\label{algorithm: GCI-MB Fusion Table}
\caption{Pseudo-code of the Clustering\_Function.}
\begin{minipage}{0.93\columnwidth}
\underline{Input: } $\pi_s=\!\{(r_s^{(\ell)},p_s^{(\ell)}(\cdot))\}_{\ell\in\mathbb{L}_s}$ from nodes $s=1,2$\\
\underline{Output:} The LIC ${C}$ and the distance matrix $D_{\omega}$ \\
\textbf{function} Clustering\_Function\,$(\pi_1,\pi_2)$
\\
\textsc{Define} a undirected graph, $G=(V,E)$ with $V=\mathbb{L}_1$\;
\textsc{Initialize} the set of edges, $E=\emptyset$\;
\textsc{Initialize}  a distance matrix $D_\omega=\emptyset$ with $|\mathbb{L}_1|$ columns and $|\mathbb{L}_2|$ rows\;
\textsc{Initialize} the row number of $D_\omega$, $i=0$\;
\textsc{Initialize} the column number of  $D_\omega$, $j=0$\;
\For {$\ell=1:|\mathbb{L}_1|$}
{$\Psi_{2}^{(\ell)}=\emptyset$\;
 \For {$\ell'=1:|\mathbb{L}_2|$}
{Evaluate the distance parameter $d(\ell,\ell')$; \hfill{$\rhd$(\ref{GCI-distance})}\;
\If {$d(\ell,\ell')<\Gamma$}{$\Psi_{2}^{(\ell)}=\Psi_{2}^{(\ell)}\cup\{\ell'\}$ \;
} $D_\omega[i+1,j+1]=d(\ell,\ell')$\;
$i=:i+1$}
\If{$\ell>1$}{
\For {$\ell_{\text{temp}}=1:\ell$}
{\If{$\Psi_{2}^{{(\ell)}}\cap\Psi_{2}^{(\ell_{\text{temp}})}\neq\emptyset$}
{$E=E\cup\{(\ell,\ell_{\text{temp}})\}$}}}
$j=:j+1$}
$[\mathbb{L}_{1,1},\cdots,\mathbb{L}_{1,N_\mathcal{C}}]=\text{Union\_Find\_Algorithm}(G(V,E))$\;
\For {$g=1:N_{\mathcal{C}}$}
{$\mathbb{L}_{2,g}=\cup_{\ell\in\mathbb{L}_{1,g}} \Psi_{2}^{(\ell)}$\;
}
\textbf{Return:} The LIC ${C}=\{(\mathbb{L}_{1,g}, \mathbb{L}_{2,g})\}_{g=1}^{N_\mathcal{C}}$ and  $D_{\omega}$.
\end{minipage}
\end{algorithm}

\begin{algorithm}[tb]\label{algorithm: GCI-MB-2}
\caption{Pseudo-code of the GM P-GCI-MB fusion}
\begin{minipage}{0.93\columnwidth}
\underline{Input: } $\pi_s=\!\{(r_s^{(\ell)},p_s^{(\ell)}(\cdot))\}_{\ell\in\mathbb{L}_s}$ from nodes $s=1,2$\\
\underline{Output:} The fused density  $\pi_\omega=\!\{(r_\omega^{(\ell)},p_\omega^{(\ell)}(\cdot))\}_{\ell\in\mathbb{L}_\omega}$\\
\textbf{function:} P\_GCI\_MB\_Fusion$(\pi_1,\pi_2)$\\
${C}=$\,Clustering\_Function\,$(\pi_1,\pi_2)$\;
\textsc{Initialize} the fused MB parameter set, $\pi_\omega=\emptyset$\;
\For{$g=1:N_\mathcal{C}$}
{\textsc{Create} the map space $\Theta_{g}=\{\theta_g:\mathbb{L}_{1,g}\rightarrow\mathbb{L}_{2,g}$\}\;
\For {$I_1\in\mathcal{F}(\mathbb{L}_{1,g})$}
{
\For{$\theta\in\Theta_g(I_1)$}
{
\For{$\ell \in I_1$}
{
          \textsc{Find} the quantity $d_\omega^{(\ell,\theta(\ell))}$ in  $D_\omega$\;
          \textsc{Calculate} the quantity $Z_\omega^{(\ell,\theta)}$;  \hfill{$\rhd$(\ref{integrate-gm})}
          \textsc{Evaluate} the quantity $p_{\omega}^{(\ell,\theta)}(\bx)$; \hfill{$\rhd$(\ref{fuse-p-gm})}\\
}
\textsc{Calculate} the weight $\widetilde w_\omega^{(I_1,\theta)}$; \hfill{$\rhd$(\ref{fuse-w_p-gm})}}
}
\textsc{Calculate} normalized factor $\eta$; \hfill{$\rhd$(\ref{K-1-gm})}

 \textsc{Normalize} the   weight $\widetilde w_\omega^{(I_1,\theta)}$; \hfill{$\rhd$(\ref{fuse-w})}

\For{$\ell\in\mathbb{L}_{1,g}$}{
 \textsc{Calculate} the MB parameter $r^{(\ell)}$;    \hfill{$\rhd$(\ref{r_l})}
  \textsc{Calculate} the MB parameter $p^{(\ell)}(\bx)$;  \hfill{$\rhd$(\ref{p_l})}
  }
  $\pi_\omega=\pi_\omega\cup\{(r^{(\ell)},p^{(\ell)}(\bx))\}_{\ell\in\mathbb{L}_{1,g}}$\;
  }
\textbf{return:} $\pi_\omega$.
\end{minipage}
\end{algorithm}
\subsection{Computational complexity}
The computational complexity of  P-GCI-MB is then analyzed  by comparison with the GCI fusion of the CPHD filter (GCI-CPHD) \cite{Battistelli,Uney-2}, considering both GM and SMC implementations. For the GM implementation,
 assume that  the probability   density of each Bernoulli component of each local MB filter is represented by a mixture of $J$ Gaussian components, and the location density of each local CPHD filter by a mixture of $n_{\max}J$ Gaussian components, where $n_{\max}$ denotes the maximum number of objects. Similarly, for the SMC implementation, assume that the probability   density of each Bernoulli component of each local MB filter is approximated by $N_{p}$ particles, while  the location density of each local CPHD filter is approximated by $n_{\max}N_{p}$ particles. Further, assume that the dimension of the single-object state $x$ is $N_d$.

 The computational complexity of P-GCI-MB mainly consists of the following four parts:
\begin{enumerate}
\item Computation of pairwise distances between hypothetic objects of two sensors:

$\bullet$ GM implementation -- $\mathcal{O}\left( L^2_{\max}J^2 N_d^3\right)$;

$\bullet$ SMC implementation -- $\mathcal{O}\left( L^2_{\max} N_{p}N_d^3\right)$;

where $L_{\max} =\max\{|\mathbb{L}_{1}|,|\mathbb{L}_{2}|\}$;

\item Finding the LIC: $\mathcal{O}\left(|E|+L_{\max}\alpha(|E|,L_{\max})\right)$, where $|E|=\frac{L_{\max}(L_{\max}-1)}{2}$ in the worst case;
\item Calculation of the fused single-object densities $p^{(\ell,\ell')}(\cdot)$:

$\bullet$ GM implementation --  $\mathcal{O}(N_{\mathcal{C}_{\text{I}}}[L^C_{\max}]^2J^2N_d^3)$;

$\bullet$ SMC implementation --  $\mathcal{O}(N_{\mathcal{C}_{\text{I}}}[L^C_{\max}]^2 N_{p} N_d^3)$;

where $L^C_{\max}=\max_{g:\mathcal{C}_g\in C_\text{I}}|\mathbb{L}_{1,g}|$ denotes the number of hypothetic objects in the largest cluster;

\item Calculation of the weight $\widetilde w_\omega^{(I_1,\theta)}$:  $\mathcal{O}\left(N_{\mathcal{C}_{\text{I}}} 2^{L^C_{\max}}L^C_{\max}!\right)$.

\end{enumerate}
It can be seen that the dominant part of the overall computational complexity is the last part which involves an exponential term and
a factorial calculation which grows exponentially with the number of hypothetic objects in the largest cluster $L^C_{\max}$.

In contrast, the computational complexity of the GM implementation of the GCI-CPHD fusion is about $\mathcal{O}\left(n_{\max}^2 J^2N_d^3\right)$ \cite{Battistelli}, while the SMC implementation is $\mathcal{O}\left(n_{\max}^2 N_{p}N_d^3\right)$. Comparing it with the complexity of P-GCI-MB, the computational costs of GCI-CPHD and P-GCI-MB become comparative when the number of hypothetic objects in the largest cluster $L^C_{\max}$ is small.

\section{Performance Assessment}
In this section, the performance of the proposed P-GCI-MB fusion is examined in two tracking scenarios in terms of the {Optimal Sub-Pattern Assignment} (OSPA) error \cite{MeMBer_Vo1}. The P-GCI-MB fusion is implemented using the GM approach proposed in Section IV. Since this paper does not focus on the problem of weight selection, we choose the Metropolis weights \cite{Xiao} in P-GCI-MB fusion for convenience. All performance metrics are averaged over 200 Monte Carlo (MC) runs.

The standard object and observation models \cite{LMB_Vo2} are used. The number of objects is time varying due to births and deaths. The single-object state $x^{k}=[p_x^k, p_y^k, v_x^k,  v_y^k]^\top$ at time $k$ is a vector composed of 2-dimensional position and velocity, and evolves to the next time according a linear Gaussian model,
\begin{equation}
f^k(\bx^k|\bx^{k-1}) = \mathcal{N}(\bx^k; \bF^k \bx^{k-1},\bQ^k)
\end{equation}
with its parameters given for a nearly constant velocity motion model:
\begin{equation}
 \bF^k =\left[
\begin{array}{cc}
I_2& \Delta I_2 \\ 0_2 & I_2
\end{array}
\right],\,\,{\bQ^k=\sigma_v^{2}\left[
\begin{array}{cc}
\frac{1}{4} \Delta^4 I_2& \frac{1}{2}\Delta^2 I_2 \vspace{1mm} \\  \frac{1}{3}\Delta^2 0_2 & \Delta^2 I_2
\end{array}
\right]}
\end{equation}
where $I_n$ and $0_n$ denote the $n\times n$ identity and zero matrices, $\Delta =1$\,s is the sampling period, and $\sigma_\nu^2=5\,\text{m/s}^{2}$ is the standard deviation of the process noise. The probability of object survival is $P_{S}^k=0.98$. Each sensor detects an object independently with probability $P_{D}^k=0.95$. The single-object observation model is linear Gaussian
\begin{equation} g^k(z^k|\bx^{k}) = \mathcal{N}(z^k; \bH^k \bx^{k},\bR^k)\end{equation}
with parameters $\bH^k=\left[
\begin{array}{cc}
I_2  & 0_2
\end{array}
\right]$ and $R^k=\sigma_\varepsilon^{2}I_2$,
where $\sigma_\varepsilon=10$\,m is the standard deviation of the measurement noise. The number of clutter reports in each scan is Poisson distributed with $\lambda=10$. Each clutter report is sampled uniformly over the whole surveillance region.

\subsection{Scenario 1}
We first study the estimation accuracy of the proposed P-GCI-MB algorithm by comparing it with the standard GCI-MB fusion in \cite{GCI-MB}. Considering the computational intractability of the standard GCI-MB fusion, here we use a simple scenario only involving two sensors and three objects on a two dimensional surveillance region $[-500\,\,500]$m $\times$ $[-500\,\,500]$m. To bound the complexity of the standard GCI-MB fusion, we also assume that the true times of births are known as prior knowledge, and hence, only at the first time step, a birth distribution with $3$ Bernoulli components is adopted. Then during the whole duration of $T=65$\,s, only three Bernoulli components are in running. The sketch map of scenario 1 is shown in Fig.~\ref{fig:performance_1} (a). Other parameters for GM implementation are set as follows. For both algorithms, the maximum number of Gaussian components for each Bernoulli components is $5$. The pruning and merging thresholds for Gaussian components are $\gamma_p=10^{-5}$ and $\gamma_m=4$, respectively. For P-GCI-MB, the GCI divergence threshold  $\gamma$ is set to 4.

The curves of OSPA errors for local filter, the standard GCI-MB fusion and P-GCI-MB fusion are plotted against time steps in Fig.~\ref{fig:performance_1} (b). The curve of the standard GCI-MB fusion can be viewed as the performance upper bound since it is a complete implementation without discarding any fusion hypothesis. We can see clearly that the curves of the standard GCI-MB fusion and P-GCI-MB fusion are almost identical, hence verifying the accuracy of the adopted independence assumption. Besides, both of the two fusion methods outperform the local filter significantly due to the information exchange and posterior fusion of two sensors.
\begin{figure}[!t]
\begin{minipage}[htb]{0.495\linewidth}
\centering
\includegraphics[width=4.5cm]{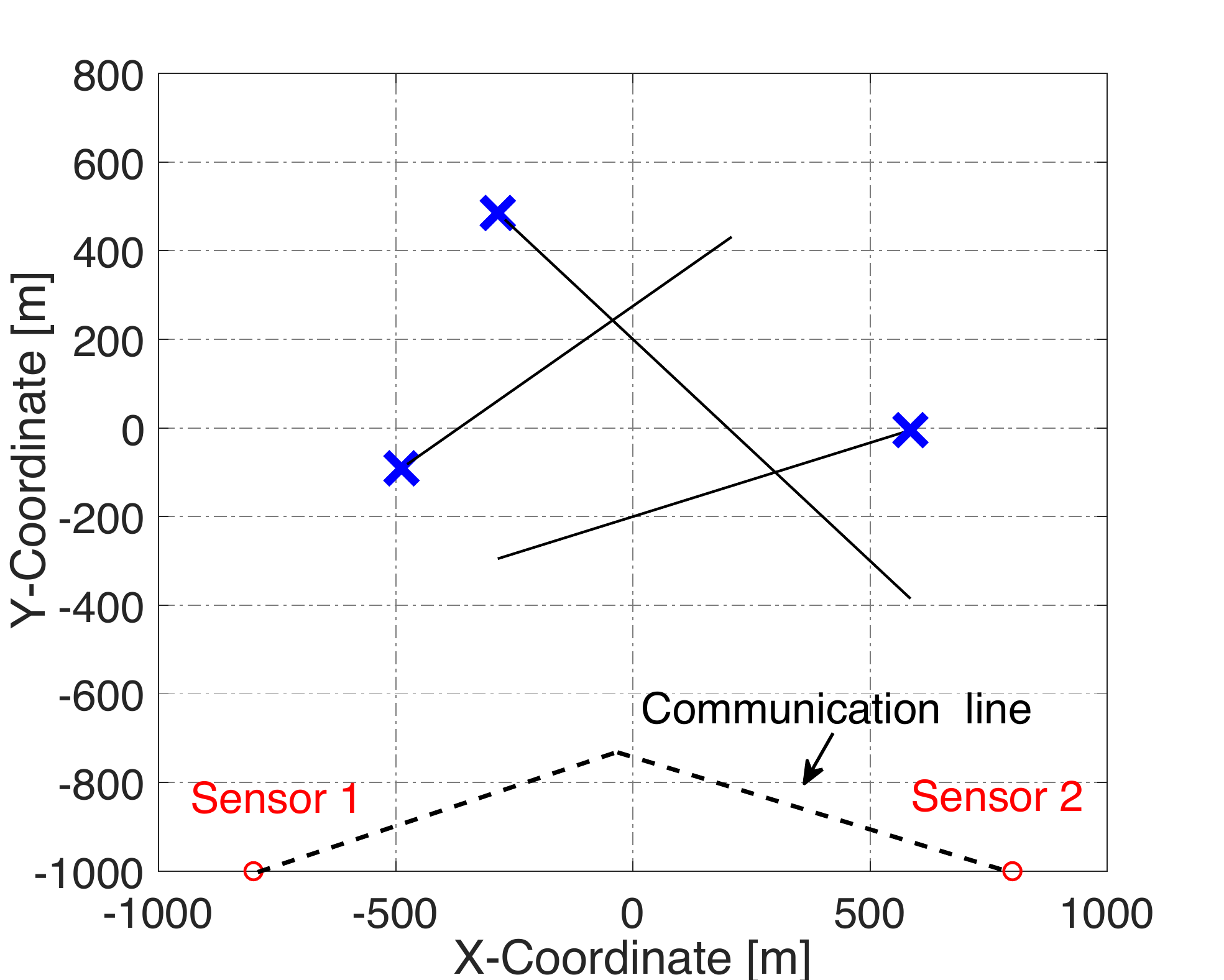}
 \centerline{\small{(a)} }\medskip
\end{minipage}
\hfill
\begin{minipage}[htb]{0.495\linewidth}
  \centering
  \centerline{\includegraphics[width=4.5cm]{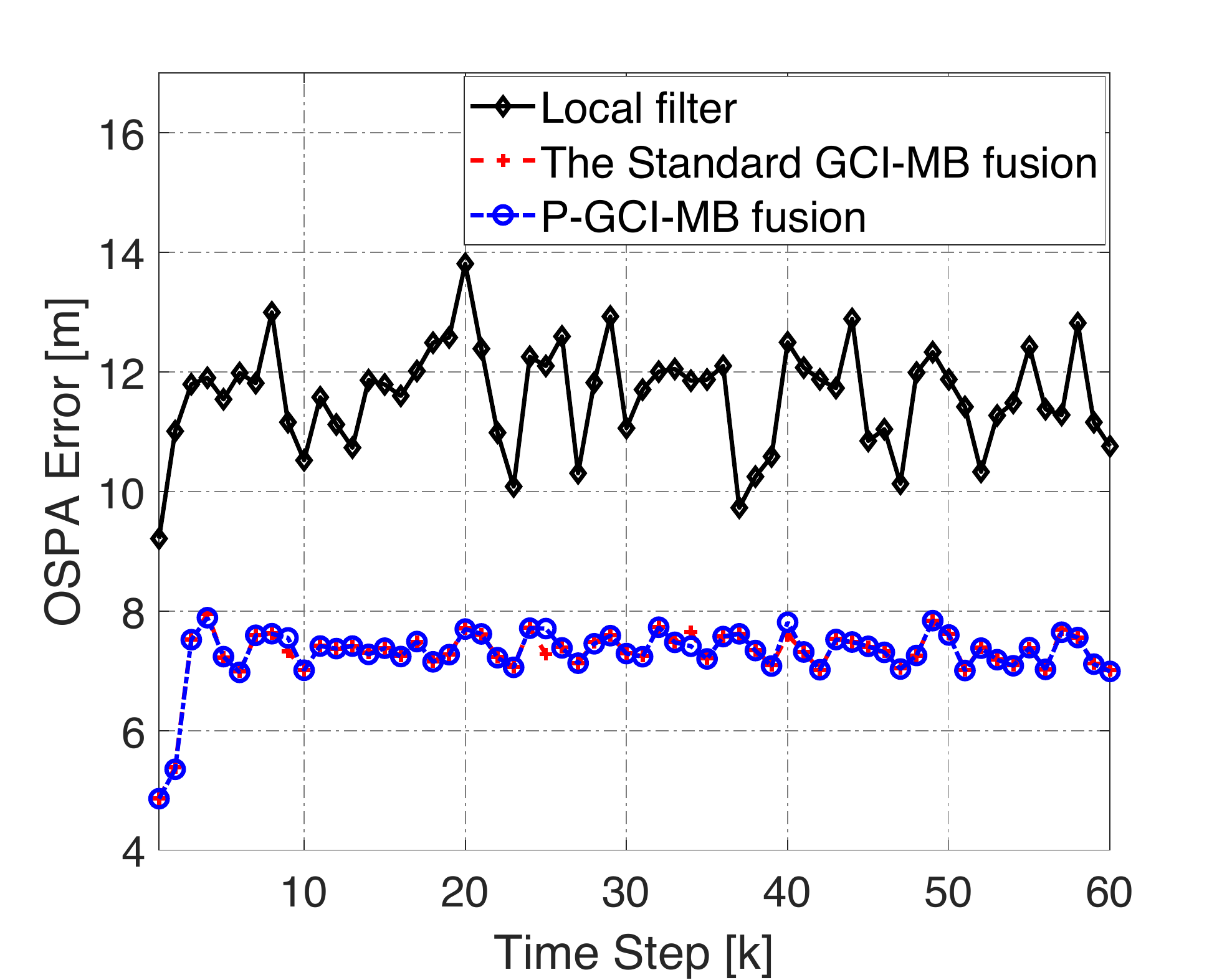}}
  \centerline{\small{(b)} }\medskip
\end{minipage}
\caption{(a) The scenario of the distributed sensor network with two sensors tracking three objects. The initial positions of the objects are indicated by crosses. (b) The curves of OSPA errors of local filter, the standard GCI-MB fusion and the P-GCI-MB fusion.\label{fig:performance_1}}
\end{figure}

\subsection{Scenario 2}
To examine the performance of the P-GCI-MB fusion in challenging scenarios, we consider a distributed sensor network with six sensors tracking forty objects in a $[-2000\,\,2000]$\,m $\times$ $[-2000\,\,2000]$\,m  surveillance region. The sketch map of this scenario is given in  Fig.~\ref{fig:scenario_2} (a), and the duration of tracking is $T=200$\,s. Further, Fig. \ref{fig:scenario_2} (b) shows superposition of all observations acquired by Sensor 5 during the period of 1--200 s.
\begin{figure}[!t]
\begin{minipage}[htb]{0.49\linewidth}
  \centering
  \centerline{\includegraphics[width=4.5cm]{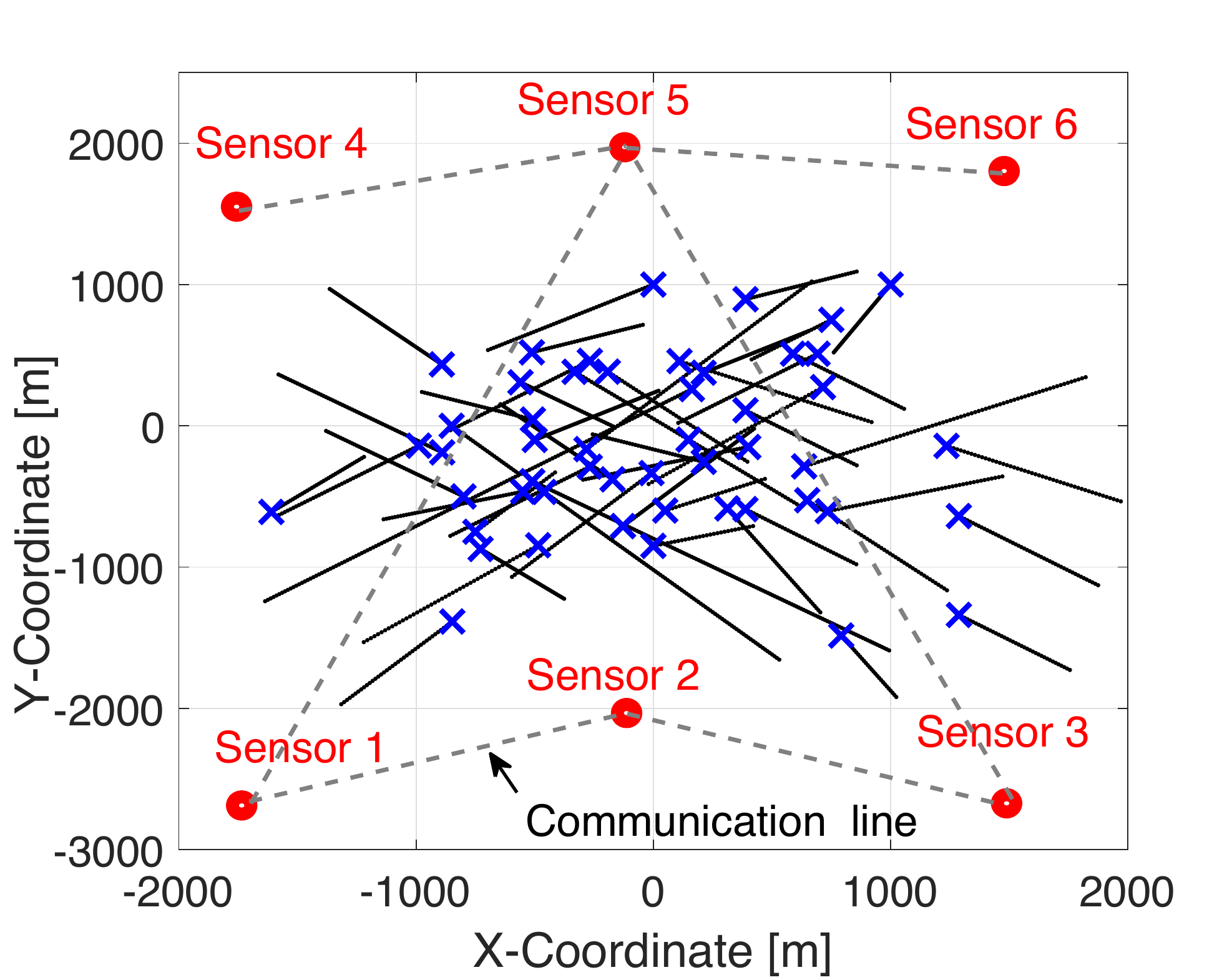}}
  \centerline{\small{(a)} }\medskip
\end{minipage}
\hfill
\begin{minipage}[htb]{0.49\linewidth}
  \centering
  \centerline{\includegraphics[width=4.5cm]{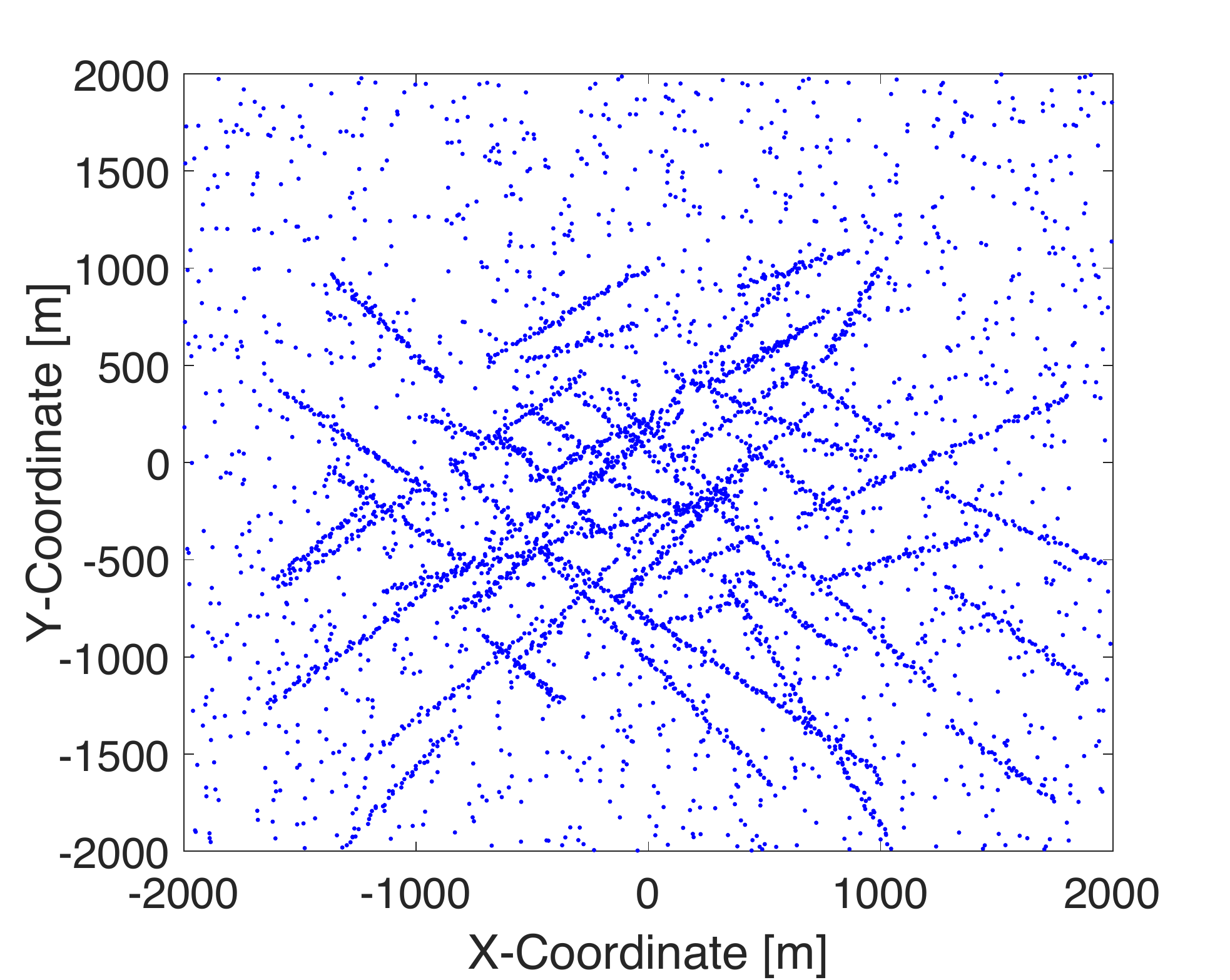}}
  \centerline{\small{(b)}}\medskip
\end{minipage}
\caption{(a) The scenario of a distributed sensor network with three sensors tracking 50 objects, where the initial positions of objects are indicated by crosses. Each sensor can only exchange posteriors with its neighbor(s); (b) Superposition of all observations acquired by Sensor 5 during  1--200 s, where each blue dot denotes either the observation of an object or a clutter report.\label{fig:scenario_2} }
\end{figure}

In order to demonstrate both the computational efficiency and estimation accuracy of the proposed P-GCI-MB method, here we compare it with the GCI fusion with PHD filter (GCI-PHD) in \cite{Uney-2} and the GCI fusion with CPHD filter (GCI-CPHD) in \cite{Battistelli}.
The PHD filter \cite{PHD-Vo}, CPHD filter \cite{Vo-CPHD} and cardinality balanced MB filter \cite{MeMber_Vo2} are chosen as the local filter for GCI-PHD fusion, GCI-CPHD fusion and P-GCI-MB fusion, respectively. Since the objects appear at unknown positions and unknown time, the MB filters adopt the adaptive birth procedure introduced in \cite{LMB-Reuter}, and the PHD/CPHD filters use the adaptive birth distribution introduced in \cite{Ristic_PHD}.

The implementation parameters of different algorithms are chosen as follows. For the P-GCI-MB fusion,  the maximum number of Bernoulli components  is set to $100$; the truncation threshold for Bernoulli components is $\gamma_t=10^{-4}$;
the maximum number, the pruning and merging thresholds of Gaussian components, and the GCI divergence threshold are set to be the same as Scenario 1.
For the GCI-CPHD and GCI-PHD fusion, the maximum number of Gaussian components is set to $150$; the pruning and merging thresholds for Gaussian components are the same as the P-GCI-MB fusion.

Regarding the sensor network topological structure, each sensor is linked with its neighbor(s) via the communication line as shown in Fig. \ref{fig:scenario_2} (a). Through the communication line, sensors can only exchange posteriors with their neighbor(s). Therefore, for example, sensor 5 performs fusion with five posteriors from sensors 1, 3, 4 and 6, and the local filter by sequentially applying the pairwise fusion four times.
\begin{figure}[!t]
\begin{minipage}[htb]{0.49\linewidth}
  \centering
  \centerline{\includegraphics[width=4.5cm]{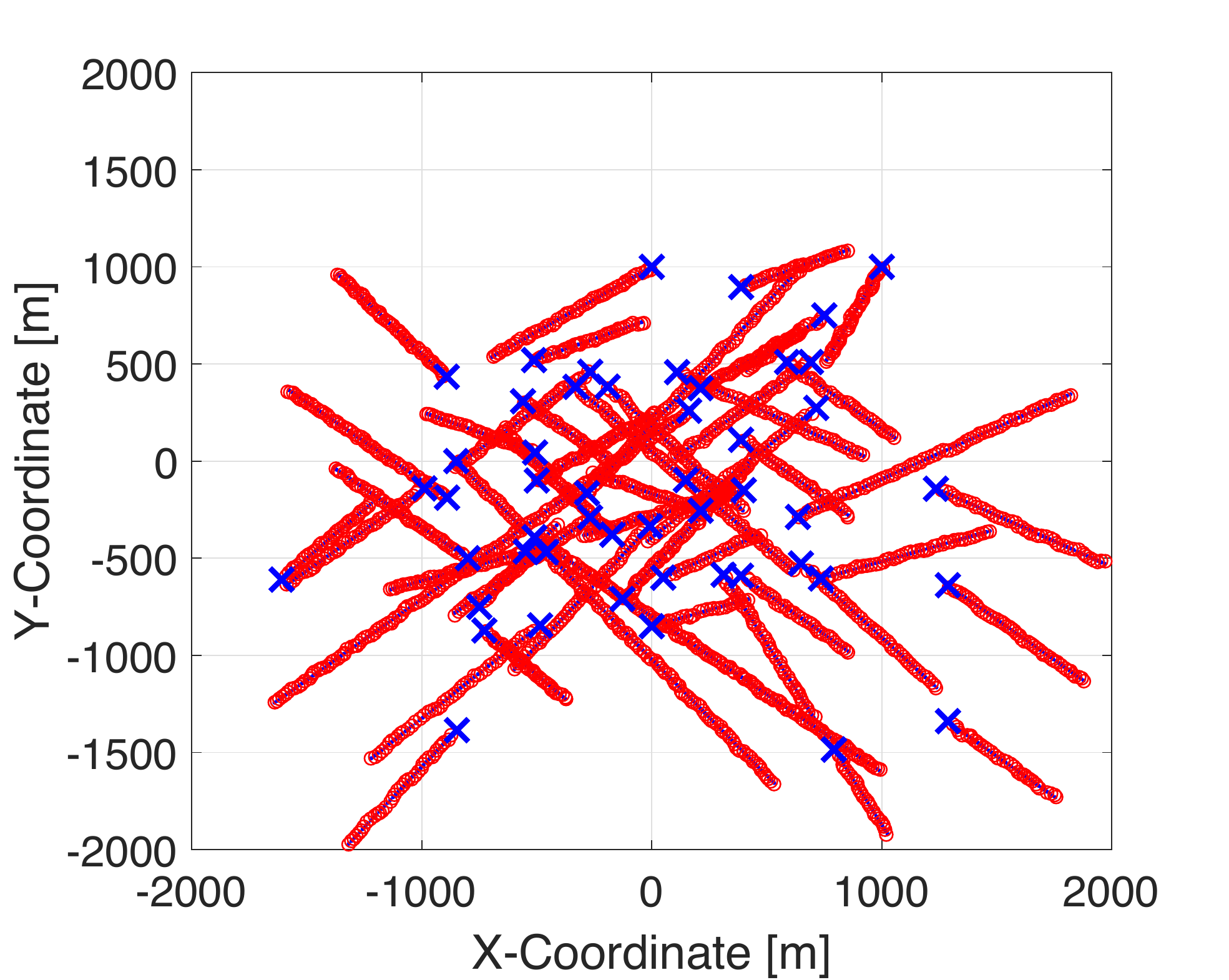}}
  \centerline{\small{(a)} }\medskip
\end{minipage}
\hfill
\begin{minipage}[htb]{0.49\linewidth}
  \centering
  \centerline{\includegraphics[width=4.5cm]{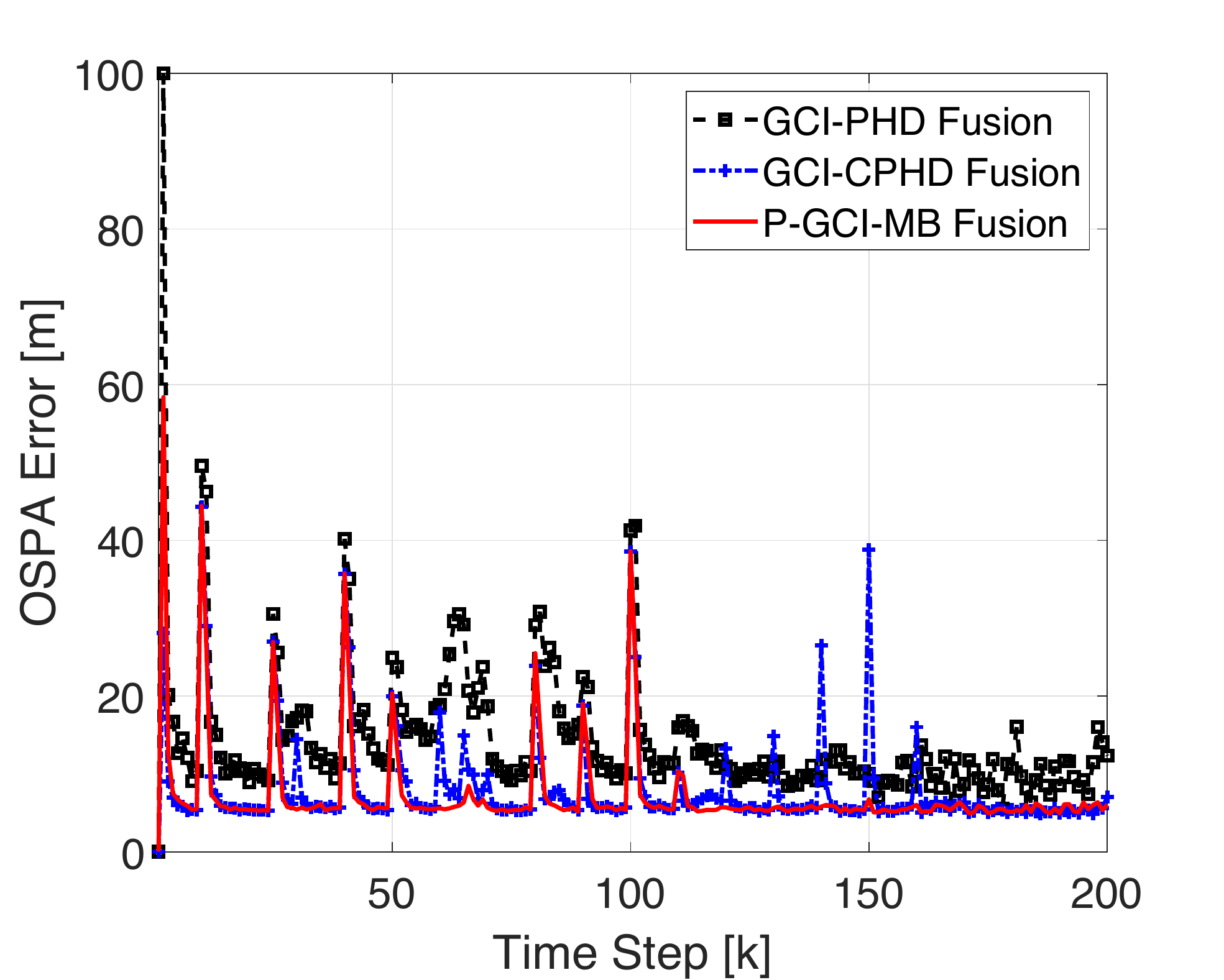}}
  \centerline{\small{(b)}}\medskip
\end{minipage}
\caption{(a) Output of P-GCI-MB fusion for a single MC run, with object estimates indicated by red circles; (b) Tracking performance of P-GCI-MB, GCI-PHD and GCI-CPHD fusion algorithms at Sensor 5.\label{fig:ospa_scenario_2} }
\end{figure}
\begin{figure}[!t]
\begin{minipage}[htb]{0.49\linewidth}
  \centering
  \centerline{\includegraphics[width=4.5cm]{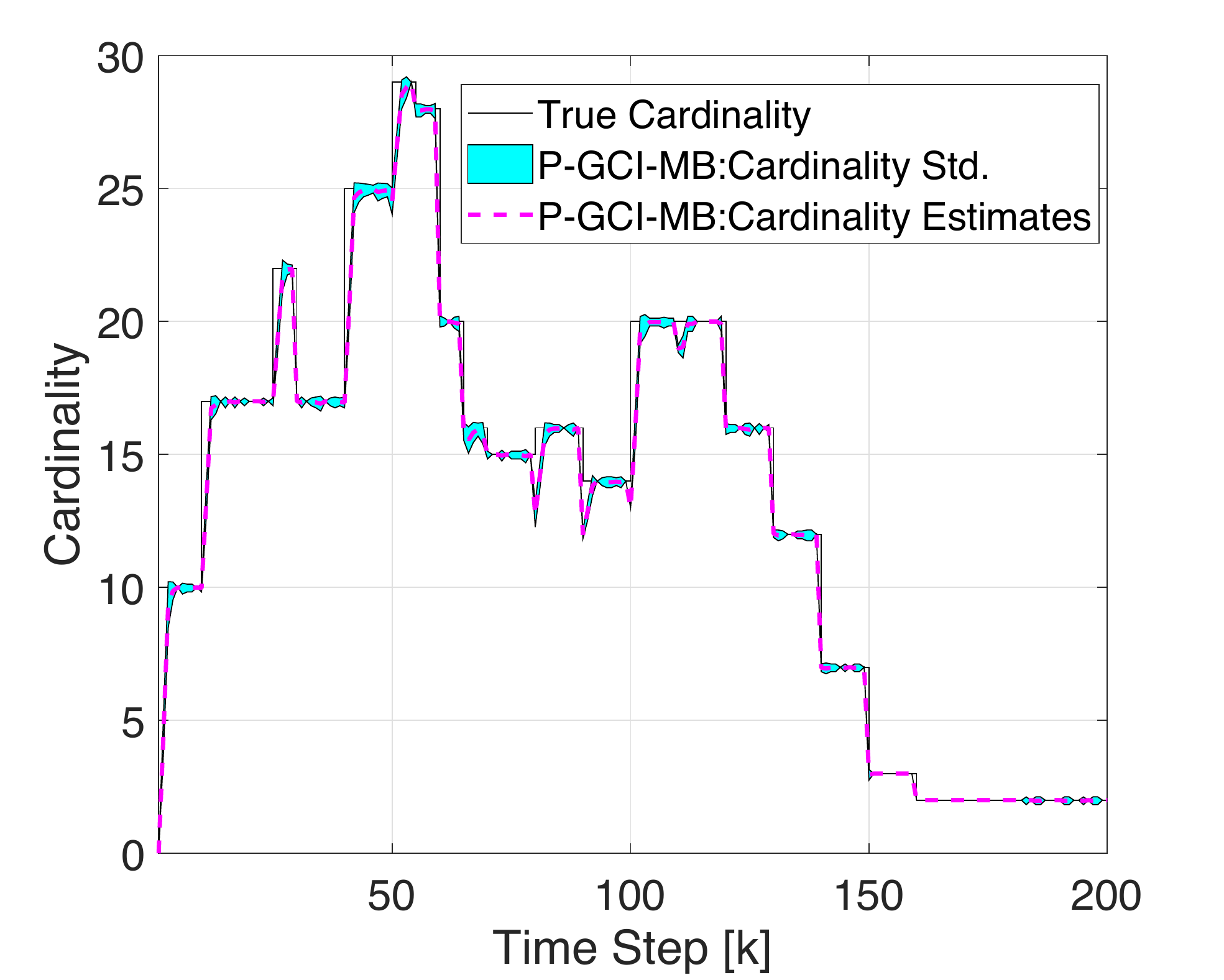}}
  \centerline{\small{(a)} }\medskip
\end{minipage}
\hfill
\begin{minipage}[htb]{0.49\linewidth}
  \centering
  \centerline{\includegraphics[width=4.5cm]{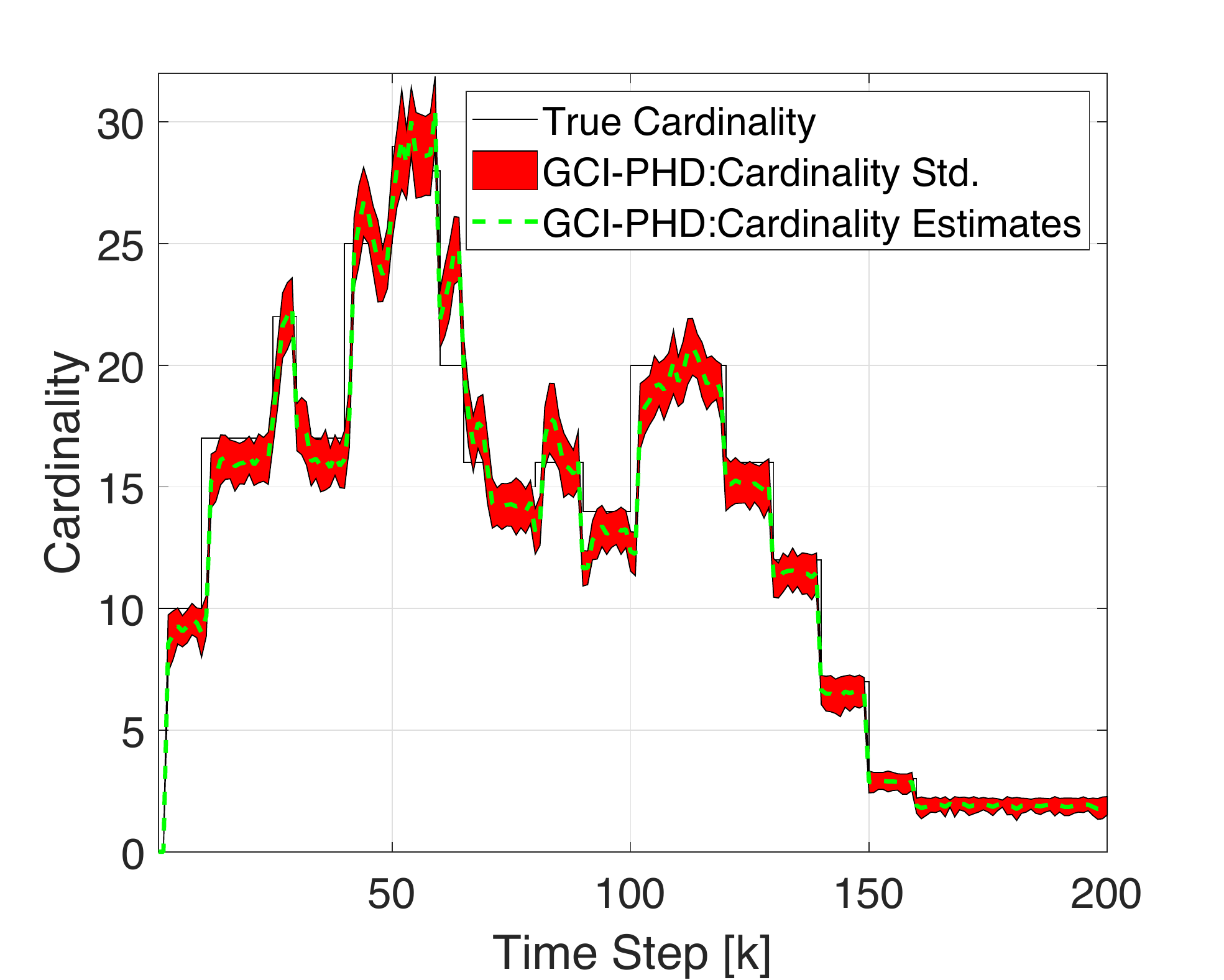}}
  \centerline{\small{(b)}}\medskip
\end{minipage}
\hfill
\begin{minipage}[htb]{0.49\linewidth}
  \centering
  \centerline{\includegraphics[width=4.5cm]{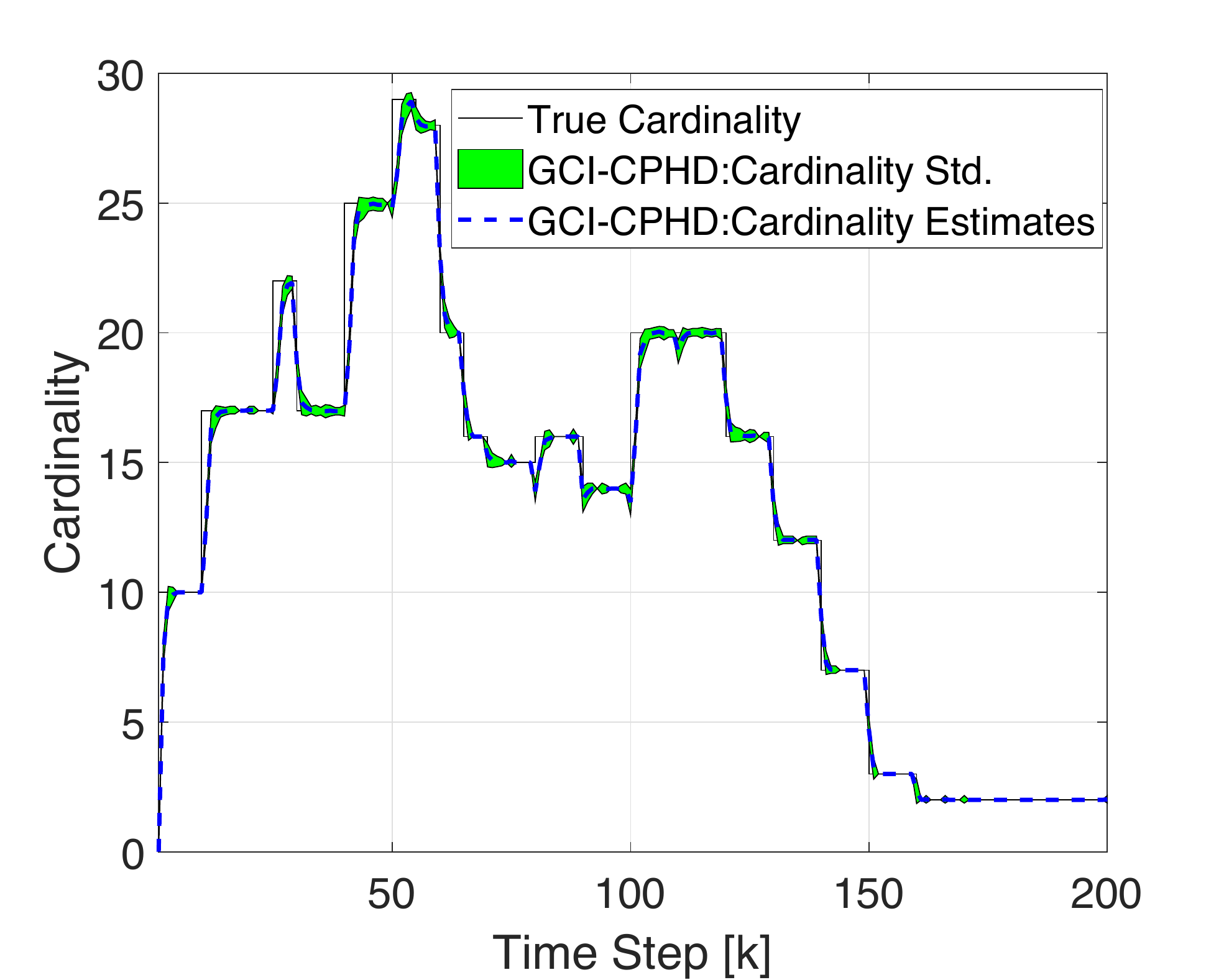}}
  \centerline{\small{(c)}}\medskip
\end{minipage}
\hfill
\begin{minipage}[htb]{0.49\linewidth}
  \centering
  \centerline{\includegraphics[width=4.5cm]{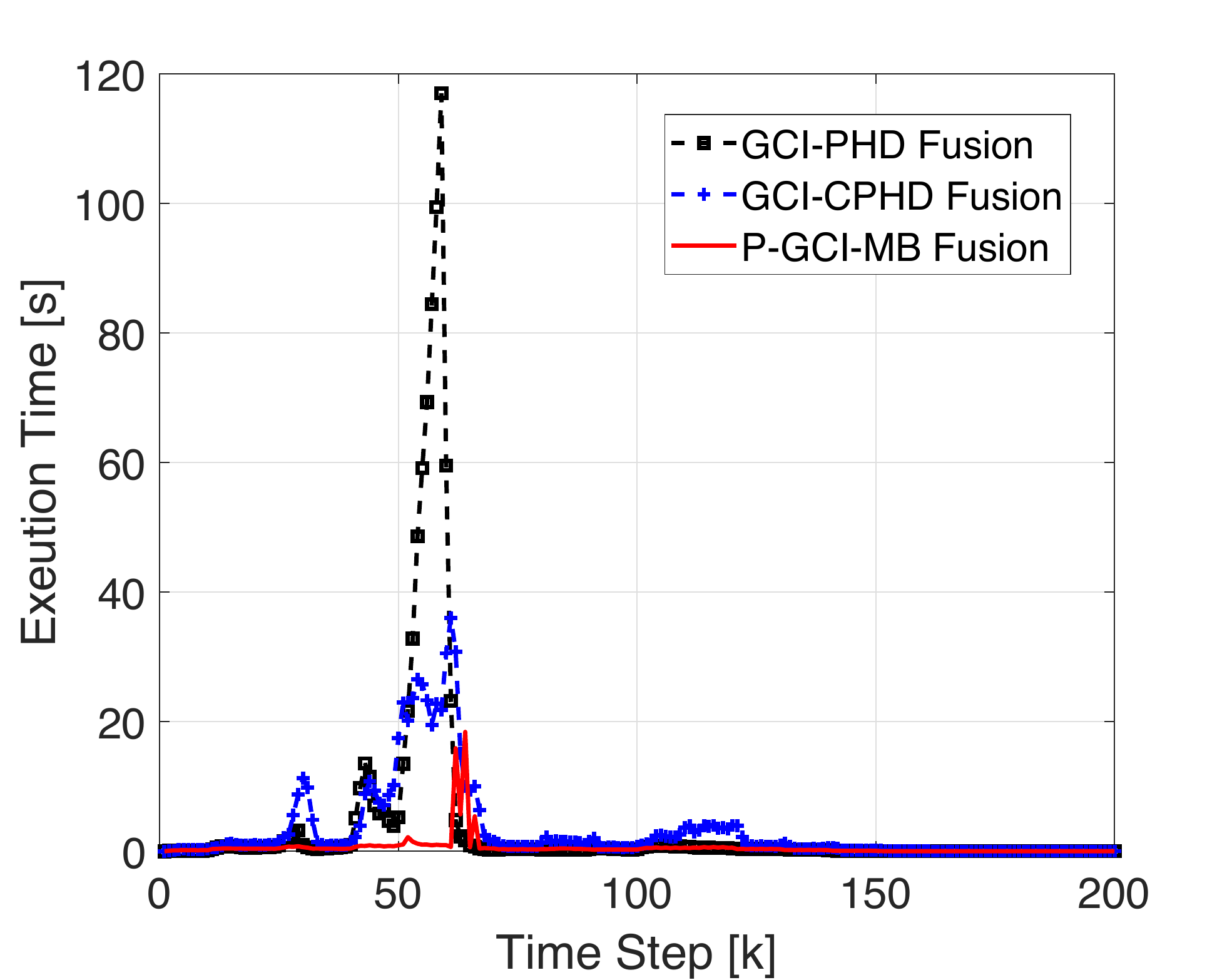}}
  \centerline{\small{(d)}}\medskip
\end{minipage}
\caption{Cardinality statistics at Sensor 5: (a) P-GCI-MB fusion; (b) GCI-PHD fusion; (c) GCI-CPHD fusion; (d) execution times.\label{fig:cardinality_time_scenario_2} }
\end{figure}
Fig. \ref{fig:ospa_scenario_2} (b) first shows the outputs of P-GCI-MB for a single MC run. It can be seen that P-GCI-MB performs accurately and consistently in the sense that it maintains locking on all objects and correctly estimates object positions for the entire scenario. Next, the OSPA errors for tracking results returned by the fusion algorithms performed at Sensor 5 are shown in Fig.~\ref{fig:ospa_scenario_2} (b). The OSPA error curves clearly demonstrate the performance difference among the P-GCI-MB, GCI-CPHD and GCI-PHD fusions. Generally speaking, both P-GCI-MB and GCI-CPHD outperform GCI-PHD, with much lower OSPA errors. Moreover, at the times of object death including 30\,s, 60\,s, 65\,s, 70\,s, 120\,s, etc., the OSPA errors of the P-GCI-MB fusion converge very fast, while the fusion performance of GCI-CPHD decreases a lot and hence a ``peak'' arises at its curve. After each transients, the performance of P-GCI-MB and GCI-CPHD are at the same level.

Figs.~\ref{fig:cardinality_time_scenario_2} (a)\,-\,(c) present the cardinality estimates and the corresponding standard deviations (Std.) returned by P-GCI-MB, GCI-CPHD and GCI-PHD at Sensor 5. It shows that the cardinality estimates given by both P-GCI-MB and GCI-CPHD are more accurate with less variations (higher level of confidence) than the GCI-PHD fusion. Further, when objects disappear, the convergence rate of cardinality estimation curve for the P-GCI-MB is faster than the GCI-CPHD.

Additionally, the comparison of the averaged execution time per time step among the P-GCI-MB, GCI-PHD and GCI-CPHD fusion methods are shown in Fig.~\ref{fig:cardinality_time_scenario_2} (d). It can be seen that the average execution time of P-GCI-MB fusion is less than GCI-CPHD and GCI-PHD over almost all time steps. Furthermore, the time gaps among P-GCI-MB, GCI-CPHD and GCI-PHD fusions become more distinct when there are more than ten objects appear in the scenario.
\section{Conclusion}
This paper investigates an efficient distributed  fusion with multi-Bernoulli filters based on generalized Covariance Intersection. By discarding the inter-cluster hypotheses  in the original fused posterior, the fused posterior is simplified to multiple independent fused ones with a  smaller number of hypothetic objects, which  leads to a significant computational reduction as well as an appealing parallel calculation.  The characterization of the $L_1$-error for the  approximation is also analysed.  The efficiency of the proposed algorithm  is verified via numerical results.
\appendices
\section{Proof of Proposition 1}
\begin{proof}
Recall that $\widetilde\pi'_\omega(X)$, which is the un-normalized GMB fused density after discarding all inter-cluster hypotheses,  is computed by (\ref{unnormalized_trun}).
For each hypothesis involved in $\widetilde\pi'_\omega(X)$, i.e., $(I_1,\theta)\in\mathcal{F}(\mathbb{L}_1)\times\Theta-\mathbb{D}$, we have $\theta(\ell)=\theta_g(\ell)$ for each $\ell\in\mathbb{L}_{1,g}$, $g\in\{g:\mathcal{C}_g\in C_\text{I}\}$. Then, utilzing Definition 1,  the space $\mathcal{F}(\mathbb{L}_1)\times\Theta-\mathbb{D}$ can be further represented as
\begin{equation}\label{union-space}
\mathcal{F}(\mathbb{L}_1)\times\Theta-\mathbb{D}=\bigcup_{g:\mathcal{C}_g\in C_\text{I}} \mathcal{F}(\mathbb{L}_{1,g})\times\Theta_g.
\end{equation}
According to (\ref{Phi2_1}), the parameter $Q_1^{I_1}$ in (\ref{fuse-w_p}) can be further rewritten as
\begin{equation}\label{factorization-Q1}
\begin{split}
Q_{1}^{I_1}=&\prod_{g:\mathcal{C}_g\in C_\text{I}} \prod_{\ell\in\mathbb{L}_{1,g}\backslash (I_1\cap \mathbb{L}_{1,g})}(1-r_s^{(\ell)})\prod_{\ell'\in I_1\cap\mathbb{L}_{1,g}} r_s^{(\ell)}\\ =&\prod_{g:\mathcal{C}_g\in C_\text{I}}Q_{1,g}^{I_1\cap\mathbb{L}_{1,g}},
\end{split}
\end{equation}
where $Q_{1,g}^{I_1\cap\mathbb{L}_{1,g}}$ is the corresponding parameter of the $g$th cluster at sensor 1;
while, similarly, $Q_2^{\theta(I_1)}$ in (\ref{fuse-w_p}) can be rewritten as
\begin{align}\label{factorization-Q2-1}
Q_{2}^{\theta(I_1)}=\prod_{g:\mathcal{C}_g\in C_\text{I}}Q_{2,g}^{\theta(I_1)\cap\mathbb{L}_{2,g}}=\prod_{g:\mathcal{C}_g\in C_\text{I}}Q_{2,g}^{\theta(I_1\cap\mathbb{L}_{1,g})}.
\end{align}
Since  we have $\theta(\ell)=\theta_g(\ell)$ for each $\ell\in I_1\cap\mathbb{L}_{1,g}$, hence,
\begin{equation}\label{factorization-Q2}
Q_{2}^{\theta(I_1)}=\prod_{g:\mathcal{C}_g\in C_\text{I}}Q_{2,g}^{\theta_g(I_1\cap\mathbb{L}_{1,g})}.
\end{equation}
Similarly, the following holds:
\begin{equation}\label{p_g}
p^{(\ell,\theta)}(x)=p^{(\ell,\theta_g)}(x), \ell\in I_1\cap\mathbb{L}_{1,g}, g\in\{g:\mathcal{C}_g\in C_\text{I}\},
\end{equation}
and hence the term,  ${\prod}_{\ell\in I_1}Z_\omega^{(\ell,\theta)}$ in (\ref{fuse-w_p}) can be factorized as
\begin{equation}\label{factorization-Z}
\prod_{\ell\in I_1}Z_\omega^{(\ell,\theta)}=\prod_{g:\mathcal{C}_g\in C_\text{I}}\prod_{\ell\in I_1\cap\mathbb{L}_{1,g}}Z_{\omega,g}^{(\ell,\theta_g)},
\end{equation}
where $Z_{\omega,g}^{(\ell,\theta_g)}$ is given in $(\ref{Z_w-g})$.

As a result, based on (\ref{factorization-Q1}), (\ref{factorization-Q2}) and (\ref{factorization-Z}), the un-normalized weight in (\ref{fuse-w_p}) yields
\begin{equation}\label{factorization-w}
\widetilde{w}_\omega^{(I_1,\theta)}=\prod_{g:\mathcal{C}_g\in C_\text{I}} \widetilde{w}_\omega^{(I_1\cap\mathbb{L}_{1,g},\theta_g)},
\end{equation}
where $\widetilde{w}_\omega^{(I_1\cap\mathbb{L}_{1,g},\theta_g)}$ is given in (\ref{fuse-widetilde-w_p-g}). Then, using (\ref{union-space}), (\ref{p_g}) and (\ref{factorization-w}),  (\ref{unnormalized_trun}) can be further simplified as (\ref{cluster-numerator}), where $\widetilde\pi_{\omega,g}(\cdot)$ is the un-normalized fused density of the $g$th cluster,
\begin{equation}
\begin{split}
 \!\!\!\widetilde\pi_{\omega,g}(\{\bx_1,\!\cdots\!,\bx_n\})\!=\!&\sum_{\sigma}
\sum_{(I_{1,g},\theta)\in \mathcal{F}_{n}\!(\mathbb{L}_{1,g}\!) \!\times\! \Theta_{g}({I_{1,g}})}\!\!\!\widetilde w_\omega^{(I_{1,g},\theta_g)}\\
&\cdot{\prod}_{i=1}^{n}p_\omega^{([I_{1,g}]^v\!(i), \theta_g)}\!(x_{\sigma(i)}).
\end{split}
\end{equation}

\begin{figure*}
\begin{equation}\label{cluster-numerator}
\begin{split}
&\widetilde{\pi}'_\omega(\!\left\{\bx_1,\!\cdots\!,\bx_n\right\})
=\sum_\sigma\sum_{(I_1,\theta)\in \mathcal{F}_n(\mathbb{L}_1) \times \Theta_{I_1} -\mathbb{D}}\widetilde{w}_\omega^{(I_1,\theta)}{\prod}_{i=1}^{n}p_\omega^{(\bII_1^v(i),\theta)}(\bx_{\sigma(i)}\!)\\
=&\sum_{\biguplus_{g:\mathcal{C}_g\in C_\text{I}}\{x_{g_1},\cdots,x_{g_{n_g}}\}=\{x_1,\cdots,x_n\}}\sum_{\substack{\sigma_g:\{1,\cdots,n_g\}\rightarrow\{1,\cdots,n_g\},\\g:\mathcal{C}_g\in C_\text{I}}}\sum_{\substack{(I_{1,g},\theta_g)\in\mathcal{F}_{n_g}(\mathbb{L}_{1,g}\times\Theta_g),\\g:\mathcal{C}_g\in C_\text{I}}} \prod_{g:\mathcal{C}_g\in C_\text{I}} \widetilde{w}_\omega^{(I_{1,g},\theta_g)}\prod_{j=1}^{n_g}p_\omega^{([I_{1,g}]^v(j),\theta_g)}(x_{\sigma(g_j)})\\
=&\sum_{\biguplus_{g:\mathcal{C}_g\in C_\text{I}}\{x_{g_1},\cdots,x_{g_{n_g}}\}=\{x_1,\cdots,x_n\}}\prod_{g:\mathcal{C}_g\in C_\text{I}}\sum_{\sigma_g:\{1,\cdots,n_g\}\rightarrow\{1,\cdots,n_g\}}\sum_{\substack{(I_{1,g},\theta_g)}\in\mathcal{F}_{n_g}(\mathbb{L}_{1,g})\times\Theta_g} \widetilde{w}_{\omega,g}^{(I_{1,g},\theta_g)}\prod_{j=1}^{n_g}p_{\omega,g}^{([I_{1,g}]^v(j),\theta_g)}(x_{g_{\sigma(j)}})\\
=&\sum_{\biguplus_{g:\mathcal{C}_g\in C_\text{I}}X_g=\{x_1,\cdots,x_n\}}\prod_{g=1}^{N_{\mathcal{C}}}\widetilde\pi_{\omega,g}(X_g)
\end{split}
\end{equation}
\rule{\textwidth}{0.35mm}
\end{figure*}

Further, based on (\ref{cluster-numerator}), the normalized constant $\eta_\omega$  can be computed by:
\begin{equation}\label{cluster-C}
\begin{split}
\eta_\omega&=\sum_{I_1\in \mathcal{F}(\mathbb{L}_1)}\sum_{\theta\in\Theta_{I_1}}\widetilde{w}_\omega^{(I_1,\theta)}\,\,\,\,\,\\
&=\prod_{g:\mathcal{C}_g\in C_\text{I}}{\sum}_{I_{1,g}\in \mathcal{F}(\mathbb{L}_{1,g})}{\sum}_{\theta_g\in\Theta_g(I_1)}\widetilde{w}_\omega^{(I_1,\theta)}\\
&= \prod_{g:\mathcal{C}_g\in C_\text{I}}\eta_{\omega,g}.
\end{split}
\end{equation}
Hence, substituting (\ref{cluster-numerator}) and (\ref{cluster-C}) into (\ref{GMB-fused-density-discarded}), we have
\begin{equation}\label{G-GCI-MB-proof}
\begin{split}
&\pi'_\omega(X)\\
=&\sum_{\biguplus_{g:\mathcal{C}_g\in C_\text{I}}X_g=X} \prod_{g:\mathcal{C}_g\in C_\text{I}}\! \widetilde\pi_{\omega,g}(X_g)/\eta_{\omega,g}\\
=&\sum_{\biguplus_{g:\mathcal{C}_g\in C_\text{I}}X_g=X} \prod_{g:\mathcal{C}_g\in C_\text{I}}\pi_{\omega,g}(X_g).
\end{split}
\end{equation}
Hence, the Proposition holds.
\end{proof}
\section{Proof of Proposition 2}
\begin{proof}
Based on (\ref{union-space}), we can further obtain that
\begin{equation}
N_{H}'=|\mathcal{F}(\mathbb{L})\times\Theta-\mathbb{D}|=\prod_{g:\mathcal{C}_g\in C_\text{I}} |\mathcal{F}(\mathbb{L}_{1,g})\times\Theta_g|.
\end{equation}

Consider a function
\begin{equation}\label{function-standard}
\phi(a_1,\cdots,a_{N})=\prod_{i=1}^N a_i-\sum_{i=1}^N  a_i,
 \end{equation}
with ${N} \geq2$.

It can be easily checked that if $a_1=\cdots=a_{N} =2$, the following holds:
\begin{equation}
\phi(a_1,\cdots,a_N)\geq 0
\end{equation}
where  the equality holds up if and only if ${N} =2$.

The partial derivative of $\phi(a_1,\cdots,a_{N})$ with respective to $a_i$ is computed by
\begin{equation}
\frac{d\,\phi}{d\,a_i}=a_i(\prod_{i'\in\{1,\cdots,{N} \}\backslash \{i\}}a_{i'}-1), i=1,\cdots,N
\end{equation}
Hence, for  $a_i> 2, i=1,\cdots,N$,  we have
\begin{align}
\frac{d\,\phi}{d\,a_i}>0,
\end{align}
and thus the following holds:
\begin{equation}
\phi(a_1,\cdots,a_{N})> 0.
\end{equation}

Define each item $a_i$ in (\ref{function-standard}) as $\triangleq|\mathcal{F}(\mathbb{L}_{1,g})\times\Theta_g|$ for each $g:\mathcal{C}_g\in\mathbb{C}_\text{I}$.
It can be easily checked that  $|\mathcal{F}(\mathbb{L}_{1,g})\times\Theta_g|\geq 2$ if $\mathbb{L}_{1,g}\geq 1$, where the equality holds if and only if $\mathbb{L}_{1,g}=1$ and $\mathbb{L}_{2,g}=1$. Hence, we have
\begin{equation}
\prod_{g:\mathcal{C}_g\in\mathbb{C}_\text{I}} |\mathcal{F}(\mathbb{L}_{1,g})\times\Theta_g|-\sum_{g:\mathcal{C}_g\in\mathbb{C}_\text{I}} |\mathcal{F}(\mathbb{L}_{1,g})\times\Theta_g|\geq 0
\end{equation}
if $\mathbb{L}_{1,g}\geq 1, g:\mathcal{C}_g\in\mathbb{C}_\text{I}$, where the equality holds if and only if $\mathbb{L}_{1,g}=1$, $\mathbb{L}_{2,g}=1$ and $N_{\mathcal{C}_\text{I}}=2$.
\end{proof}

\section{Proof of Proposition 3}
\begin{proof}Based on the definition of $L_1$-norm, we have,
\begin{equation}
\notag\begin{split}
&\,\,\,\,\,\,\, \left\| \pi_{\omega}(\cdot)-\pi'_{\omega}(\cdot)\right\|_1\\
&\leqslant \int \left| \sum_{\sigma}\!\sum_{(I_1,\theta)\in \mathbb{H}-\mathbb{D}}\!\left(\!\frac{\widetilde{w}_\omega^{(I_1,\theta)}}{\eta_\omega}\!-\!\frac{\widetilde{w}_\omega^{(I_1,\theta)}}{\eta'_\omega}\!\right)\!\prod_{i=1}^{n}p_\omega^{([I_1]^v(i),\theta)}(\bx_{\sigma(i)})\right| \delta X\\
\end{split}
\end{equation}

\begin{equation}
\begin{split}
&\,\,\,\,\,\,\,\,\,+\int \left| \sum_{\sigma}\sum_{(I_1,\theta)\in \mathbb{D}}\frac{\widetilde{w}_\omega^{(I_1,\theta)}}{\eta'_\omega}\,\prod_{i=1}^{n}p_\omega^{([I_1]^v(i),\theta)}(\bx_{\sigma(i)})\right| \delta X\\
& =\sum_{(I_1,\theta)\in \mathbb{H}-\mathbb{D}}\left|\frac{\widetilde{w}_\omega^{(I_1,\theta)}}{\eta_\omega}-\frac{\widetilde{w}_\omega^{(I_1,\theta)}}{\eta'_\omega}\right|+\sum_{(I_1,\theta)\in \mathbb{D}}\frac{\widetilde{w}_\omega^{(I_1,\theta)}}{\eta'_\omega}\\
&=1-\frac{\eta'_\omega}{\eta_\omega}+\frac{\eta_\omega-\eta'_\omega}{\eta'_\omega}\\
&=2\frac{(\eta_\omega-\eta'_\omega)}{\eta_\omega}=2\sum_{(I,\theta)\in\mathbb{D}} \widetilde{w}_\omega^{(I_1,\theta)}\big/ \eta_\omega\\
&=2\sum_{(I_1,\theta)\in\mathbb{D}} \!\!\left(Q_1^{I_1}\right)^{\omega_1}\!\!\left(Q_2^{\theta(I_1)}\right)^{\omega_2}\!\!\exp(-\!\sum_{\ell\in I_1}d(\ell,\theta(\ell)))\big/ \eta_\omega
\end{split}
\end{equation}
Hence, the Proposition holds.
\end{proof}

\section{Lemma 1 and Proof of Proposition 4}
\subsection{Lemma 1 and a proof}
\begin{Lem}
Suppose that $(\mathbb{L}_1^{a},\mathbb{L}_2^{a})$ and $(\mathbb{L}_1^{b},\mathbb{L}_2^{b})$ are two mutually isolated clusters of $(\mathbb{L}_1,\mathbb{L}_2)$.  There is no path between any label $\ell_a\in\mathbb{L}_1^{a}$ and any label $\ell_b\in\mathbb{L}_1^{b}$ in the undirected graph $G(V,E)$ constructed by $(\mathbb{L}_1,\mathbb{L}_2)$.
\end{Lem}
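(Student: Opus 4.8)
The plan is to read the hypothesis as stating that $(\mathbb{L}_1^{a},\mathbb{L}_2^{a})$ and $(\mathbb{L}_1^{b},\mathbb{L}_2^{b})$ jointly form a clustering of $(\mathbb{L}_1,\mathbb{L}_2)$ in the sense of Definition~\ref{clustering} --- so $\mathbb{L}_1=\mathbb{L}_1^{a}\cup\mathbb{L}_1^{b}$ with $\mathbb{L}_1^{a}\cap\mathbb{L}_1^{b}=\emptyset$, and likewise $\mathbb{L}_2=\mathbb{L}_2^{a}\cup\mathbb{L}_2^{b}$ with $\mathbb{L}_2^{a}\cap\mathbb{L}_2^{b}=\emptyset$ --- and that they are mutually isolated, i.e.\ $d(\ell,\ell')>\gamma$ for every $(\ell,\ell')\in(\mathbb{L}_1^{a}\times\mathbb{L}_2^{b})\cup(\mathbb{L}_1^{b}\times\mathbb{L}_2^{a})$. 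I would then argue in two stages: first that $G(V,E)$ has no edge joining a vertex of $\mathbb{L}_1^{a}$ to a vertex of $\mathbb{L}_1^{b}$, and then that this alone rules out any path between the two parts, since the vertex set $V=\mathbb{L}_1$ is partitioned by $\mathbb{L}_1^{a}$ and $\mathbb{L}_1^{b}$.

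For the first stage I would fix $\ell_a\in\mathbb{L}_1^{a}$ and $\ell_b\in\mathbb{L}_1^{b}$ and suppose, towards a contradiction, that $(\ell_a,\ell_b)\in E$. By the construction of $E$ (recall $\Psi_2^{(\ell)}$ from~(\ref{cluster_1}) and the edge rule $\Psi_2^{(\ell_a)}\cap\Psi_2^{(\ell_b)}\neq\emptyset$), there is some $\ell'\in\mathbb{L}_2$ with $d(\ell_a,\ell')\le\gamma$ and $d(\ell_b,\ell')\le\gamma$. Since the two clusters partition $\mathbb{L}_2$, either $\ell'\in\mathbb{L}_2^{a}$ or $\ell'\in\mathbb{L}_2^{b}$. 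In the former case $(\ell_b,\ell')\in\mathbb{L}_1^{b}\times\mathbb{L}_2^{a}$, so mutual isolation gives $d(\ell_b,\ell')>\gamma$, contradicting $d(\ell_b,\ell')\le\gamma$; in the latter case $(\ell_a,\ell')\in\mathbb{L}_1^{a}\times\mathbb{L}_2^{b}$, so $d(\ell_a,\ell')>\gamma$, again a contradiction. Hence $\Psi_2^{(\ell_a)}\cap\Psi_2^{(\ell_b)}=\emptyset$, so no edge of $G$ has one endpoint in $\mathbb{L}_1^{a}$ and the other in $\mathbb{L}_1^{b}$.

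For the second stage I would invoke the elementary graph fact that if the vertex set of $G$ splits into two parts with no edge between them, those parts lie in distinct connected components: by induction on its length, any walk $\ell_0,e_0,\ell_1,\dots,e_{K-1},\ell_K$ with $\ell_0\in\mathbb{L}_1^{a}$ has every $\ell_k\in\mathbb{L}_1^{a}$, because an edge incident to a vertex of $\mathbb{L}_1^{a}$ cannot reach $\mathbb{L}_1^{b}$; hence no path ends in $\mathbb{L}_1^{b}$, which is the assertion. (In the proof of Proposition~\ref{Pro_cluster} this lemma would be applied with $(\mathbb{L}_1,\mathbb{L}_2)$ instantiated as $(\mathbb{L}_{1,g},\mathbb{L}_{2,g})$ for a connected component $\mathcal{C}_g$ of the union--find output, where all vertices of $\mathbb{L}_{1,g}$ are already linked by paths, so $\mathcal{C}_g$ admits no split into two mutually isolated sub-clusters, i.e.\ the clustering is indivisible.) The computations here are trivial; the one place needing care is that \emph{both} halves of the mutual-isolation condition are used and that the two clusters exhaust $\mathbb{L}_2$ --- were a third cluster allowed, the shared element $\ell'$ could lie outside both $\mathbb{L}_2^{a}$ and $\mathbb{L}_2^{b}$ and neither contradiction would follow --- so the two-cluster (partition) reading of the hypothesis is exactly what makes the lemma go through.
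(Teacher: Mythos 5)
Your proof is correct and takes essentially the same route as the paper's: both arguments reduce to the observation that an edge joining some $\ell_a\in\mathbb{L}_1^{a}$ to some $\ell_b\in\mathbb{L}_1^{b}$ would force a common gated object $\ell'\in\Psi_2^{(\ell_a)}\cap\Psi_2^{(\ell_b)}$ which, lying in $\mathbb{L}_2^{a}$ or in $\mathbb{L}_2^{b}$, contradicts mutual isolation; you simply package this as ``no crossing edge, hence no path,'' whereas the paper runs the same contradiction through a three-case analysis along an assumed path. Your explicit two-cluster (partition) reading of the hypothesis is also precisely what the paper's own proof uses implicitly when it asserts that the shared object must lie in $\mathbb{L}_2^{a}$ or $\mathbb{L}_2^{b}$, so flagging it is a clarification rather than a deviation.
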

\begin{proof}
This proposition is proved by reductio. Suppose that there  exists a path between $\ell_a \in \mathbb{L}_{1}^{a}$ and $\ell_b\in \mathbb{L}_{1}^{b}$, denoted by an alternating sequence of
vertices and edges, $\ell_a,e_a,\ell_1,e_1,\cdots,\ell_K,e_K,\ell_b$. Three cases  which exhaust all possible paths are discussed as follow.

\begin{itemize}
 \item All non-end vertices $\ell_1,\cdots,\ell_{K}$ belong to $\mathbb{L}_{1}^{a}$.  Based on the definition of path, $e_K=(\ell_{K},\ell_b)$ is an edge. That is to say, $\ell_{K}$ has common associated hypothetic objects from sensor 2 with  hypothetic object $\ell_b$. Thus, there is a hypothetic object $\ell'\in\mathbb{L}_{2}^{b}$  such that,
\begin{equation}
d(\ell_K,\ell')\leq \gamma,
\end{equation}
or there is a hypothetic object $\ell'\in\mathbb{L}_2^a$ such that
\begin{equation}
d(\ell_b,\ell')\leq \gamma
\end{equation}
 \item All non-end vertices $\ell_1,\cdots,\ell_{K}$ belong to $\mathbb{L}_{1}^b$. Similarly, $e_a=(\ell_a,\ell_1)$ is an edge, and thus there is a hypothetic object $\ell'\in\mathbb{L}_{2}^{b}$  such that
\begin{equation}
d(\ell_a,\ell')\leq \gamma,
\end{equation}
or or there is a hypothetic object $\ell'\in\mathbb{L}_2^a$ such that
\begin{equation}
d(\ell_1,\ell')\leq \gamma.
\end{equation}
 \item A part of  non-end vertices $\ell_1,\cdots,\ell_{K}$ belong to $\mathbb{L}_{1}^{a}$, while the other part $\ell_{K_0+1},\cdots,\ell_{K}$ belong to $\mathbb{L}_{1}^{b}$. Similarly, $(\ell_{K_0},\ell_{{K_0}+1})$ is  an edge.   and thus  there is a hypothetic object $\ell'\in\mathbb{L}_{2}^{b}$  such that
\begin{equation}
d(\ell_{K_0},\ell')\leq \gamma,
\end{equation}
or there is a hypothetic object $\ell'\in\mathbb{L}_{2}^{a}$  such that
\begin{equation}
d(\ell_{K_0+1},\ell')\leq \gamma.
\end{equation}
\end{itemize}
 To summary, all the aforementioned cases are inconsistent with that $(\mathbb{L}_1^{a},\mathbb{L}_2^{a})$ and $(\mathbb{L}_1^{b},\mathbb{L}_2^{b}) $ are isolated clusters. Consequently, there does not exist a path between $\ell_a$ and $\ell_b$, and the proposition holds.
\end{proof}
\subsection{A Proof of Proposition 4}
\begin{proof}
Firstly, we focus on proving that $C$ is an isolated clustering.
As presented in Section IV-C 2), the union $C_{\text{I}}\cup C_\text{II}$ in the finalized clustering $C$ of (\ref{type1-2}) is obtained through seeking the connected components of the constructed undirected graph $G(V,E)$.  Recall the structure of the undirected graph, each vertex $\ell$ represents a  hypothetic object in $\mathbb{L}_1$, and the edge  means that two hypothetic objects  of sensor 1 have  common associated hypothetic objects from sensor 2.

 Based on the definition of the connected component, for any  $g_1\neq g_2\in\{g:\mathcal{C}_{g}\in C_{\text{I}}\cup C_{\text{II}}\}$, any vertex in $V_{g_1}$ is not connected with any vertex in $V_{g_2}$, and thus cannot be paired to be an edge. Hence, any hypothetic object $\ell_1\in \mathbb{L}_{1,_{g_1}}(\mathbb{L}_{1,{g_1}}=V_{g_1})$ and any hypothetic object $\ell_2\in \mathbb{L}_{1,g_2} (\mathbb{L}_{1,g_2}=V_{g_2})$ have no common associated hypothetic objects from sensor 2. That is to say that for any two different clusters $\mathcal{C}_{g_1}$ and $\mathcal{C}_{g_2}$ of $C_{\text{I}}\cup C_{\text{II}}$, the following holds:
\begin{equation}
\mathbb{L}_{2,g_1}\cap \mathbb{L}_{2,g_2}=\emptyset.
\end{equation}

 In addition,  for any cluster $\mathcal{C}_{g_3}\in C_{\text{III}}$ where $C_{\text{III}}$ is given in (\ref{type3}), we have $\mathbb{L}_{1,g_3}$ is an empty set and $\mathbb{L}_{2,g_3}\subseteq\mathbb{L}_2\backslash (\cup_{\ell\in\mathbb{L}_1}\Psi_2^{(\ell)})$.  Then, based on (\ref{cluster_1}) and (\ref{union-find-clustering}), we have \begin{equation}\label{C-III-L2}
 \mathbb{L}_{2,g_3}\cap\left(\bigcup_{g: \mathcal{C}_g\in C_{\text{I}}\cup C_{\text{II}}}\mathbb{L}_{2,g}\right)=\emptyset.
  \end{equation}

As a result, for any two different clusters $\mathcal{C}_g\neq \mathcal{C}_{g'}\in \left(C_{\text{I}}\cup C_{\text{II}}\cup C_{\text{III}}\right)$:
\begin{itemize}
\item
 if $\mathbb{L}_{1,g}\times\mathbb{L}_{2,g'}\neq \emptyset $ any hypothetic object $\ell$ in $\mathbb{L}_{1,g}$ of the $g$th cluster and  any $\ell'$ in $\mathbb{L}_{2,g'}$ of the $g'$th cluster satisfy
\begin{equation}
d(\ell,\ell')>\gamma;
\end{equation}
\item if $\mathbb{L}_{1,g'}\times\mathbb{L}_{2,g}\neq \emptyset$, any hypothetic object $\ell$ in $\mathbb{L}_{1,g'}$ of the $g'$th cluster and  any $\ell'$ in $\mathbb{L}_{2,g}$ of the $g$th cluster satisfy
\begin{equation}
d(\ell,\ell')>\gamma.
\end{equation}
\end{itemize}
Thus, the following inequality holds:
\begin{equation}\label{clustering-criterion11}
\min_{(\ell,\ell')\in\mathcal{P}} d(\ell,\ell')>\gamma,
\end{equation}
where $\mathcal{P}$ is given in (\ref{clustering-criterion2}). According to Definition 5, $C$ is an isolated clustering.

Secondly, that $C$ is indivisible  is proved by reductio.  Suppose that any isolated cluster $\mathcal{C}_g\in C$ can be further divided into two isolated clusters $\mathcal{C}_{g}^{a}=(\mathbb{L}_{1,g}^{a},\mathbb{L}_{2,g}^{a})$ and $\mathcal{C}_{g}^{b}=(\mathbb{L}_{1,g}^{b},\mathbb{L}_{2,g}^{b})$, where
\begin{align}
\mathbb{L}_{s,g}^{a}=&\mathbb{L}_{s,g}\backslash\mathbb{L}_{s,g}^{b},s=1,2.
\end{align}

If $\mathcal{C}_g\in \left(C_{\text{I}}\cup C_{\text{I}}\right)$,  based on Lemma 1,  there does not exist a path between an arbitrary vertex $\ell_a\in\mathbb{L}_{1,g}^{a}$ and  an arbitrary vertex $\ell_b\in\mathbb{L}_{1,g}^{b}$. Then, the corresponding subgraph $G_g(V_g,E_g)$ is not a connected component, which is inconsistent with  that  all $G_1(V_1,E_1),\cdots,G_{N_G}(V_{N_G},E_{N_G})$ are  connected components of $G(V,E)$.

If $\mathcal{C}_g\in C_{\text{III}}$, since $\mathbb{L}_{1,g}=\emptyset$, we have $\mathbb{L}^a_{1,g}=\mathbb{L}^b_{1,g}=\emptyset$.  Based on Definition 4, $\mathbb{L}^a_{1,g}\cup\mathbb{L}^a_{2,g}\neq\emptyset$ and $\mathbb{L}^b_{1,g}\cup\mathbb{L}^b_{2,g}\neq\emptyset$, hence, neither $\mathbb{L}^a_{2,g}$ or $\mathbb{L}^b_{2,g}$ is an empty set. Also because the sets $\mathbb{L}^a_{2,g}$ and $\mathbb{L}^b_{2,g}$ are disjoint, then $\mathbb{L}_{2,g}=\mathbb{L}^a_{2,g}\cup\mathbb{L}^b_{2,g}$ is not a singleton set, which is inconsistent with clusters of type $C_{\text{III}}$.

Above all, the proposition holds.
\end{proof}

\bibliographystyle{IEEEtran}
\bibliography{GCI-MB-G}

\begin{thebibliography}{10}
\providecommand{\url}[1]{#1}
\csname url@samestyle\endcsname
\providecommand{\newblock}{\relax}
\providecommand{\bibinfo}[2]{#2}
\providecommand{\BIBentrySTDinterwordspacing}{\spaceskip=0pt\relax}
\providecommand{\BIBentryALTinterwordstretchfactor}{4}
\providecommand{\BIBentryALTinterwordspacing}{\spaceskip=\fontdimen2\font plus
\BIBentryALTinterwordstretchfactor\fontdimen3\font minus
  \fontdimen4\font\relax}
\providecommand{\BIBforeignlanguage}[2]{{%
\expandafter\ifx\csname l@#1\endcsname\relax
\typeout{** WARNING: IEEEtran.bst: No hyphenation pattern has been}%
\typeout{** loaded for the language `#1'. Using the pattern for}%
\typeout{** the default language instead.}%
\else
\language=\csname l@#1\endcsname
\fi
#2}}
\providecommand{\BIBdecl}{\relax}
\BIBdecl

\bibitem{CY-Chong}
C.-Y. Chong, S.~Mori, and K.-C. Chang, ``Distributed multitarget multisensor
  tracking,'' \emph{Multitarget-multisensor tracking: Advanced applications},
  vol.~1, pp. 247--295, 1990.

\bibitem{Mahler-1}
R.~P. Mahler, ``Optimal/robust distributed data fusion: a unified approach,''
  in \emph{Proc. SPIE Defense and Security Symp.}, 2000, pp. 128--138.

\bibitem{Uhlmann}
J.~Uhlmann, ``Dynamic map building and localization for autonomous vehicles,''
  \emph{Unpublished doctoral dissertation, Oxford University}, vol.~36, 1995.

\bibitem{double-counting}
G.~Battistelli, L.~Chisci, C.~Fantacci, A.~Farina, and R.~Mahler, ``Distributed
  fusion of multitarget densities and consensus {PHD}/{CPHD} filters,'' in
  \emph{SPIE Defense, Security and Sensing}, vol. 9474, Baltimore, MD, 2015.

\bibitem{Uney-2}
M.~Uney, D.~E. Clark, and S.~J. Julier, ``Distributed fusion of {PHD} filters
  via exponential mixture densities,'' \emph{IEEE J. Sel. Topics Signal
  Process.}, vol.~7, no.~3, pp. 521--531, 2013.

\bibitem{EMD-Julier}
S.~J. Julier, T.~Bailey, and J.~K. Uhlmann, ``Using exponential mixture models
  for suboptimal distributed data fusion,'' in \emph{Proc. IEEE Nonlinear
  Statist. Signal Process. Workshop (NSSPW'6), Cambridge, U. K.}, 2006, pp.
  160--163.

\bibitem{Battistelli}
G.~Battistelli, L.~Chisci, C.~Fantacci, A.~Farina, and A.~Graziano, ``Consensus
  {CPHD} filter for distributed multitarget tracking.'' \emph{IEEE J. Sel.
  Topics Signal Process.}, vol.~7, no.~3, pp. 508--520, 2013.

\bibitem{Clark}
D.~Clark, S.~Julier, R.~Mahler, and B.~Ristic, ``Robust multi-object sensor
  fusion with unknown correlations,'' in \emph{Proc. Sensor Signal Process. for
  Defence (SSPD'10)}.\hskip 1em plus 0.5em minus 0.4em\relax IET, 2010, pp.
  1--5.

\bibitem{Mehmet}
M.~B. Guldogan, ``Consensus {B}ernoulli filter for distributed detection and
  tracking using multi-static doppler shifts,'' \emph{IEEE Signal Process.
  Lett.}, vol.~6, no.~21, pp. 672--676, 2014.

\bibitem{PHD-Vo}
B.-N. Vo and W.-K. Ma, ``The {G}aussian mixture probability hypothesis density
  filter,'' \emph{IEEE Trans. Signal Process.}, vol.~54, no.~11, pp.
  4091--4104, 2006.

\bibitem{book_mahler}
R.~Mahler, \emph{Statistical Multisource-Multitarget Information Fusion}.\hskip
  1em plus 0.5em minus 0.4em\relax Norwell, MA, USA: Artech House, 2007.

\bibitem{refr:CPHD}
------, ``{PHD} filters of higher order in target number,'' \emph{IEEE
  Transactions on Aerospace and Electronic Systems}, vol.~43, no.~4, pp.
  1523--1543, Oct. 2007.

\bibitem{Franken}
D.~Franken, M.~Schmidt, and M.~Ulmke, ``" {S}pooky action at a distance" in the
  cardinalized probability hypothesis density filter,'' \emph{IEEE Trans.
  Aerosp. Electron. Syst.}, vol.~4, no.~45, pp. 1657--1664, 2009.

\bibitem{Vo-CPHD}
B.-T. Vo, B.-N. Vo, and A.~Cantoni, ``Analytic implementations of the
  cardinalized probability hypothesis density filter,'' \emph{IEEE Trans.
  Signal Process.}, vol.~55, no.~7, pp. 3553--3567, 2007.

\bibitem{MeMber_Vo2}
B.~T. Vo, B.~N. Vo, and A.~Cantoni, ``The cardinality balanced multi-target
  multi-{B}ernoulli filter and its implementations,'' \emph{IEEE Trans. Signal
  Process.}, vol.~57, no.~2, pp. 409--423, 2009.

\bibitem{MeMber_Vo3}
B.-N. Vo, B.-T. Vo, N.-T. Pham, and D.~Suter, ``Joint detection and estimation
  of multiple objects from image observations,'' \emph{IEEE Trans. Signal
  Process.}, vol.~58, no.~10, pp. 5129--5141, 2010.

\bibitem{Radar-tracking}
P.~Zong and M.~Barbary, ``Improved multi-{B}ernoulli filter for extended
  stealth targets tracking based on sub-random matrices,'' \emph{IEEE Sensors
  Journal}, vol.~16, no.~5, pp. 1428--1447, 2016.

\bibitem{Reza_visual_tracking}
R.~Hoseinnezhad, B.-N. Vo, and B.-T. Vo, ``Visual tracking in background
  subtracted image sequences via multi-{B}ernoulli filtering,'' \emph{IEEE
  Trans. Signal Process.}, vol.~61, no.~2, pp. 392--397, 2013.

\bibitem{Reza_sensor_control_AES}
A.~K. Gostar, R.~Hoseinnezhad, and A.~Bab-Hadiashar, ``Multi-{B}ernoulli sensor
  control via minimization of expected estimation errors,'' \emph{IEEE Trans.
  Aerosp. Electron. Syst.}, vol.~51, no.~3, pp. 1762--1773, 2015.

\bibitem{Reza_sensor_control_letter}
------, ``Robust multi-{B}ernoulli sensor selection for multi-target tracking
  in sensor networks,'' \emph{IEEE Signal Process. Lett.}, vol.~20, no.~12, pp.
  1167--1170, 2013.

\bibitem{MeM-superpositional-sensor}
S.~Nannuru and M.~Coates, ``Hybrid multi-{B}ernoulli and {CPHD} filters for
  superpositional sensors,'' \emph{IEEE Transactions on Aerospace and
  Electronic Systems}, vol.~51, no.~4, pp. 2847--2863, 2015.

\bibitem{Dunne_MM}
D.~Dunne and T.~Kirubarajan, ``Multiple model multi-{B}ernoulli filter for
  manoeuvring targets,'' \emph{IEEE Trans. Aerosp. Electron. Syst.}, vol.~49,
  no.~4, pp. 2679--2692, Oct. 2013.

\bibitem{Vo_radar_target_nonline}
B.~T. Vo, B.~N. Vo, and R.~Hoseinnezhad, ``Robust multi-bernoulli filtering,''
  \emph{IEEE J. Sel. Topics Signal Process.}, vol.~7, no.~3, pp. 399--409, Jun.
  2013.

\bibitem{Estimator-Multi-Bernoulli}
E.~Baser, T.~Kirubarajan, M.~Efe, and B.~Balaji, ``A novel joint multitarget
  estimator for multi-{B}ernoulli models,'' \emph{IEEE Trans. Signal Process.},
  vol.~64, no.~19, pp. 5038--5051, 2016.

\bibitem{Enhanced-multi-Bernoulli-1}
J.~L. Williams, ``An efficient, variational approximation of the best fitting
  multi-{B}ernoulli filter,'' \emph{IEEE Trans. Signal Process.}, vol.~63,
  no.~1, pp. 258--273, 2015.

\bibitem{Enhanced-multi-Bernoulli-2}
K.~Granstrom, P.~Willett, and Y.~Bar-Shalom, ``Approximate multi-hypothesis
  multi-{B}ernoulli multi-object filtering made multi-easy,'' \emph{IEEE Trans.
  Signal Process.}, vol.~64, no.~7, pp. 1784--1797, 2016.

\bibitem{GCI-MB}
B.~Wang, W.~Yi, R.~Hoseinnezhad, S.~Li, L.~Kong, and X.~Yang, ``Distributed
  fusion with multi-{B}ernoulli filter based on generalized {C}ovariance
  {I}ntersection,'' \emph{IEEE Trans. Signal Process.}, vol.~65, pp. 242--255,
  Jan. 2017.

\bibitem{LMB_Vo2}
B.-N. Vo, B.-T. Vo, and D.~Phung, ``Labeled random finite sets and the {B}ayes
  multi-target tracking filter,'' \emph{IEEE Trans. Signal Process.}, vol.~62,
  no.~24, pp. 6554--6567, 2014.

\bibitem{conference-version}
S.~Li, W.~Yi, B.~Wang, and L.~Kong, ``Computationally efficient distributed
  multi-sensor multi-bernoulli filter,'' in \emph{Proc. IEEE Int. Fusion
  Conf.}, Cambridge, UK, Jul. 2018, pp. 1--8.

\bibitem{book2_mahler}
R.~P. Mahler, \emph{Advances in statistical multisource-multitarget information
  fusion}.\hskip 1em plus 0.5em minus 0.4em\relax Artech House, 2014.

\bibitem{Xiao}
L.~Xiao, S.~Boyd, and S.~Lall, ``A scheme for robust distributed sensor fusion
  based on average consensus,'' in \emph{Proc. 4th Int. Symposium on
  Information Processing in Sensor Networks, \emph{Los Angeles, USA}}, 2005,
  pp. 63--70.

\bibitem{consensus-weight-2}
G.~C. Calafiore and F.~Abrate, ``Distributed linear estimation over sensor
  networks,'' \emph{Int. J. Control}, vol.~82, no.~5, pp. 868--882, 2009.

\bibitem{adaptive-weight}
M.~B. Hurley, ``An information-theoretic justification for covariance
  intersection and its generalization,'' in \emph{Proc. FUSION Conf. (FUSION
  '02)}, vol.~1, July 2002, pp. 505--511.

\bibitem{LMB_Vo}
B.~N. Vo and B.~T. Vo, ``Labeled random finite sets and multi-object conjugate
  priors,'' \emph{IEEE Trans. on Signal Process.}, vol.~61, no.~10, Jul. 2013.

\bibitem{LMB-Reuter}
S.~Reuter, B.-T. Vo, B.-N. Vo, and K.~Dietmayer, ``The labeled
  multi-{B}ernoulli filter,'' \emph{IEEE Trans. Signal Process.}, vol.~62,
  no.~12, pp. 3246--3260, 2014.

\bibitem{Efficient-GCI-GLMB}
S.~Li, G.~Battistelli, L.~Chisci, W.~Yi, B.~Wang, and L.~Kong,
  ``Computationally efficient multi-agent multi-object tracking with labeled
  random finite sets,'' \emph{IEEE Trans. Signal Process.}, vol.~67, no.~1, pp.
  260--275, 2019.

\bibitem{GCI-GLMB}
S.~Li, W.~Yi, R.~Hoseinnezhad, G.~Battistelli, B.~Wang, and L.~Kong, ``Robust
  distributed fusion with labeled random finite sets,'' \emph{IEEE Trans. on
  Signal Process.}, vol.~66, no.~2, pp. 278--293, Jan. 2018.

\bibitem{Bhattacharyya-distance}
A.~Bhattacharyya, ``On a measure of divergence between two multinomial
  populations,'' \emph{Indian Journal of Statistics}, vol.~7, no.~4, 1946.

\bibitem{Unionfind3}
J.~Hopcroft and J.~D. Ullman, \emph{Set-merging algorithms}.\hskip 1em plus
  0.5em minus 0.4em\relax SIAM J. Comput., 1973.

\bibitem{unionfind_3}
M.~A. Patwary, P.~Refsnes, and F.~Manne, ``Multi-core spanning forest
  algorithms using the disjoint-set data structure,'' in \emph{Pro of the 26th
  IEEE International Parallel and Distributed Processing Symposium (IPDPS
  2012)}, 2012.

\bibitem{unionfind_2}
M.~Patwary, J.~Blair, and F.~Manne, ``Experiments on union-find algorithms for
  the disjoint-set data structure,'' in \emph{Pro of the 9th International
  Symposium on Experimental Algorithms (SEA 2010). Springer, LNCS 6049}, 2010,
  pp. 411--423.

\bibitem{July}
S.~J. Julier, ``An empirical study into the use of {C}hernoff information for
  robust, distributed fusion of {G}aussian mixture models,'' in \emph{Proc. 9th
  Int. Conf. Inf. Fusion}, 2006, pp. 1--8.

\bibitem{fractional-power}
S.~Julier and J.~Uhlmann, ``Unscented filtering and nonlinear estimation,''
  \emph{Proceedings of the IEEE}, vol.~92, no.~3, pp. 401--422, Mar. 2004.

\bibitem{MeMBer_Vo1}
D.~Schuhmacher, B.-T. Vo, and B.-N. Vo, ``A consistent metric for performance
  evaluation of multi-object filters,'' \emph{IEEE Trans. Signal Process.},
  vol.~56, no.~8, pp. 3447--3457, 2008.

\bibitem{Ristic_PHD}
B.~Ristic, D.~Clark, B.-N. Vo, and B.-T. Vo, ``Adaptive target birth intensity
  for {PHD} and {CPHD} filters,'' \emph{IEEE Trans. Aerosp. Electron. Syst.},
  vol.~48, no.~2, pp. 1656--1668, 2012.

\end{thebibliography}
\end{document}